\newtheorem{theorem}{Theorem}[section]
\newtheorem{definition}[theorem]{Definition}
\newtheorem{lemma}[theorem]{Lemma}
\newtheorem{remark}[theorem]{Remark}
\newtheorem{example}[theorem]{Example}
\numberwithin{equation}{section} \setlength{\textheight}{9.25 in}
\pgfplotsset{compat=newest}
\crefname{ineq}{Inequality}{Inequalities}
\DeclarePairedDelimiter\innprod{\langle}{\rangle}
\newcommand{\dotprod}[2]{\innprod*{#1,#2}}
\DeclarePairedDelimiter\abs{\lvert}{\rvert}
\DeclarePairedDelimiter\set{\{}{\}}
\DeclarePairedDelimiter\ceil{\lceil}{\rceil}
\DeclarePairedDelimiter\brackets{[}{]}
\DeclarePairedDelimiter\norm{\|}{\|}
\renewcommand{\vec}[1]{#1}
\newcommand{\NN}{\mathbb{N}}
\newcommand{\RR}{\mathbb{R}}
\newcommand{\CC}{\mathbb{C}}
\newcommand{\veccoll}[4]{\{ #1_{#2} \}_{#3}^{#4}}
\newcommand{\tr}{\operatorname{tr}}
\begin{document}


\title{Low coherence unit norm tight frames}
\author{Somantika Datta}
\address{Department of Mathematics, University of Idaho, Moscow, ID 83844-1103, USA}
\email{sdatta@uidaho.edu}
\author{Jesse Oldroyd}
\address{Department of Mathematics and Computer Science, West Virginia Wesleyan College, Buckhannon, WV 26201, USA}
\email{joldroyd.j@wvwc.edu}


\maketitle

\begin{abstract}
Equiangular tight frames (ETFs) have found significant applications in signal processing and coding theory due to their robustness to noise and transmission losses. ETFs are characterized by the fact that the coherence between any two distinct vectors is equal to the Welch bound. This guarantees that the maximum coherence between pairs of vectors is minimized. Despite their usefulness and widespread applications, ETFs of a given size $N$ are only guaranteed to exist in $\mathbb{R}^{d}$ or $\mathbb{C}^{d}$ if $N = d + 1$. This leads to the problem of finding approximations of ETFs of $N$ vectors in $\mathbb{R}^{d}$ or $\mathbb{C}^{d}$ where $N > d+1.$ To be more precise, one wishes to construct a unit norm tight frame (UNTF) such that the maximum coherence between distinct vectors of this frame is as close to the Welch bound as possible. In this paper low coherence UNTFs in $\mathbb{R}^d$ are constructed by adding a strategically chosen set of vectors called an \textit{optimal} set to an existing ETF of $d+1$ vectors. In order to do so,  combinatorial objects called block designs are used. Estimates are provided on the maximum coherence between distinct vectors of this low coherence UNTF. It is shown that for certain block designs, the constructed UNTF attains the smallest possible maximum coherence between pairs of vectors among all UNTFs containing the starting ETF of $d+1$ vectors. This is particularly desirable if there does not exist a set of the same size for which the Welch bound is attained.
\end{abstract}

\textsc{Keywords:} \keywords{Block designs, coherence, equiangular frames, tight frames, Welch bound}

\textsc{2000 MSC:} \subjclass{42C15; 94Axx}
\section{Introduction} \label{Intro}
\subsection{Background and Motivation}\label{subsection :: background and motivation}

The maximum coherence between pairs in a set of $N$ unit vectors $\{f_1, \ldots, f_N\}$ in $\mathbb{C}^d$ satisfies the following inequality due to Welch \cite{W1}:
\begin{equation} \label{WelchInEq}
\max_{i \neq j} |\langle f_i, f_j \rangle| \geq \sqrt{\frac{N-d}{d(N-1)}}, \quad N \geq d.
\end{equation}
The quantity $\alpha = \sqrt{\frac{N-d}{d(N-1)}}$ appearing on the right side of (\ref{WelchInEq}) is known as the \textit{Welch bound}. Sets of unit vectors attaining the lower bound in (\ref{WelchInEq}) are mathematical objects called \textit{equiangular tight frames} (ETFs) \cite{SH1, DattaHC}.
Such sets arise in many different areas as in communications, quantum information processing, and coding theory \cite{W1, MM1, DS1, SH1,Ren04, Sco06, KR1, RoyScott07, Hoggar82,Xia05}.
Consequently, the problem of constructing ETFs and determining conditions under which they exist has gained substantial attention \cite{SH1, Holmes04, Tropp05, Sustik07, Bodmann09, Bodmann2010, Waldron09, Fickus2012,redmond2009existence}. The Gram matrix of an ETF has two distinct eigenvalues: zero and $\frac Nd$ with multiplicities $N-d$ and $d,$ respectively \cite{Sustik07}. Conditions on eigenvalues for the existence of an ETF have been discussed in \cite{SH1,Holmes04, Sustik07, Bodmann09, Bodmann2010}, among others. A graph theoretic approach to constructing ETFs has been studied in \cite{Waldron09}. A correspondence discovered by Fickus et al.~\cite{Fickus2012} uses Steiner systems to directly construct the frame vectors of certain ETFs, bypassing the common technique of constructing a suitable Gram matrix.
This approach lets one construct highly redundant sparse ETFs.
However, in the real case, this approach can give rise to ETFs only if a real Hadamard matrix of a certain size exists.

Despite the desirability and importance of ETFs, these cannot exist for many pairs $(N, d).$
When the Hilbert space is $\mathbb{R}^d,$ the maximum number of equiangular lines is bounded by $\frac{d(d+1)}{2},$ and for $\mathbb{C}^d$ the bound is $d^2$ \cite{Lemmens73, DGS75}.
Even when these restrictions hold, ETFs are very hard to construct and do not exist for many pairs $(N,d)$ \cite{Sustik07}. This leads to generalizations and approximations of  ETFs.
For a real ETF, the off-diagonal entries of the Gram matrix are either $\alpha$ or $-\alpha$, where $\alpha$ is the Welch bound. In other words, the off-diagonal entries of the Gram matrix all have modulus equal to $\alpha.$
Generalizing this notion, a unit norm tight frame (UNTF) whose associated Gram matrix has off-diagonal entries with $k$ distinct \textit{moduli} is called a \textit{$k$-angle tight frame} \cite{Dattaa}.
These objects have also been explored in~\cite{casazza2017toward} under the name \textit{$d$-angular frames}.
Note that for a set of unit vectors, the diagonal entries of the corresponding Gram matrix will equal $1$.
Under this definition, ETFs are viewed as $1$-angle tight frames.
$2$-angle tight frames, or \textit{biangular tight frames}, are discussed in~\cite{casazza2017toward,haas2017constructions} and combinatorial constructions are provided.
It is to be noted that often in the literature, a unit norm tight frame is called a \textit{two-distance tight frame} \cite{Barg2014, Larman77} if the off-diagonal entries of the associated Gram matrix take on either of two values $a$ and $b.$ 
In that case, real ETFs are thought of as two-distance tight frames instead of $1$-angle tight frames, as done here.
Non-equiangular two-distance tight frames are examples of the biangular tight frames mentioned above.
Explicit constructions of $k$-angle tight frames can be found in \cite{Dattaa}.
Besides generalizing the notion of an ETF, $k$-angle tight frames prove to be important due to their connection to graphs and association schemes as discussed in \cite{Dattaa}.

The goal here is to come up with objects that can be considered approximations of ETFs.
In particular, one seeks UNTFs with low coherence among the vectors.
Section~\ref{section::supermaximal_approx_of_ETFs} contains the main contribution of the work presented here. In Section~\ref{section::supermaximal_approx_of_ETFs}, a deterministic way of constructing low coherence UNTFs is given that has the added benefit of being applicable when an equiangular set of lines, and in particular an ETF, of a certain size is known not to exist.
The idea is to start from an ETF of $d+1$ vectors in $\mathbb{R}^d$  and determine an \textit{optimal} set of vectors that can be added to this ETF such that the resulting union is a UNTF in $\mathbb{R}^d$ for which the maximum coherence between distinct vectors can be minimized even if there does not exist a set of the same size for which the Welch bound is attained.  This UNTF can be considered as an approximation of an ETF of the same size in $\mathbb{R}^d$. The exact meaning of ``optimal'' in this context is given in Definition~\ref{definition::optimal-frames-to-add-to-etf}. This approach relies on the fact that an ETF of $d+1$ vectors in $\mathbb{R}^d$ or $\mathbb{C}^d$ always exists \cite{SH1, Sustik07, Dattaa, Goyal_2001} which can be viewed as the vertices of a regular simplex centered at the origin. Combinatorial objects called block designs are used to determine the optimal sets to be added to the starting ETF.
The main results of Section~\ref{section::supermaximal_approx_of_ETFs} are \cref{theorem::hadamard-design-construction} and Theorem~\ref{theorem:: 2-designs : maximum cross correlation}. Section~\ref{section::conclusion} provides some concluding remarks.

\subsection{Notation and Preliminaries} \label{Notation}
In a finite dimensional Hilbert space like $\mathbb{R}^d$ or $\mathbb{C}^d,$ a spanning set is called a \textit{frame}.\footnote{In an infinite dimensional space, the notion of a frame is far more subtle \cite{Chr03, Dau92} and will not be needed here.}
Given a set $\{f_1, \ldots, f_N\}$ of vectors in $\mathbb{R}^{d}$ or $\mathbb{C}^d,$ let $F$ be the matrix whose columns are the vectors $f_1, \ldots, f_N.$
$F$ will be called the \textit{synthesis operator} of $\{f_{1},\dots,f_{N}\}$.
For a \textit{tight frame} the $d \times d$ matrix $F F^*$ is a multiple of the identity.
That is, $FF^{*}$ equals $\lambda I$, where $I$ is the identity, and $\lambda$ is called the \textit{frame bound}.
A tight frame for which each vector is unit norm is called a \textit{unit norm tight frame} (UNTF).
The matrix $F^*F$ is the \textit{Gram matrix} of the set $\{f_1, \ldots, f_N\}$ and its non-zero eigenvalues are the same as the eigenvalues of $FF^*.$
The $(i,j)^{\text{th}}$ entry of the Gram matrix is the inner product $\langle f_{j},f_{i}\rangle$.
An \textit{equiangular tight frame} (ETF) is a set $\{f_1, \ldots, f_N\}$ in a $d$-dimensional Hilbert space $\mathcal{H}$ satisfying \cite{Sustik07}:
\begin{enumerate}[(i)]
	\item $FF^{*} = \frac{N}{d}I$, i.e., the set is a tight frame.

	\item $\|f_i\| = 1,$ for $i = 1, \ldots, N,$ i.e., the set is unit norm.

	\item $|\langle f_i, f_j \rangle| = \alpha,$ $1\leq i\neq j \leq N,$ where $\alpha$ is the Welch bound.
\end{enumerate}
Throughout, $\mathcal{H}$ will be either $\mathbb{C}^d$ or $\mathbb{R}^d$.\footnote{The results can be easily generalized to any $d$-dimensional Hilbert space $\mathcal{H}$ since $\mathcal{H}$ would be isomorphic to $\mathbb{R}^d$ or $\mathbb{C}^d.$} A frame of $N$ vectors in $\mathbb{R}^{d}$ (respectively, $\mathbb{C}^{d}$) will be referred to as an $(N,d)$ real (respectively, complex) frame.
When $\mathcal{H}$ is not specified, the frame will  be called an $(N, d)$ frame.
Let $\set{\lambda_{i}}_{i=1}^{N}$ be the eigenvalues of the corresponding Gram matrix $G$.
The \textit{frame potential}~\cite{BF03} of a set of unit vectors $\{f_{1},\dots,f_{N}\}$ is the quantity $FP(\{f_{i}\}_{i=1}^{N})$ given by
\begin{equation}\label{equation :: frame potenial}
	FP(\{f_{i}\}_{i=1}^{N}) = \sum_{i=1}^{N}\sum_{j=1}^{N}|\langle f_{i},f_{j}\rangle|^{2} = \sum_{i=1}^{N}\lambda_{i}^{2} = \tr G^{2}.
\end{equation}
The following result on the frame potential is found in~\cite{BF03}.
\begin{theorem}[Theorem 6.2, \cite{BF03}]\label{theorem::untfs-minimize-frame-potential}
	Let $d, N \in \NN$ with $d\leq N$ and let $\veccoll{\vec{f}}{i}{i=1}{N}$ be a set of unit norm vectors in $\RR^{d}$ or $\CC^{d}$.
	Then $FP(\veccoll{\vec{f}}{i}{i=1}{N})$ is bounded below by $\frac{N^{2}}{d}$ with equality if and only if the frame is a unit norm tight frame (an orthonormal basis in the case $N=d$).
\end{theorem}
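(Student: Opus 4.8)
The plan is to reduce everything to a single application of the Cauchy--Schwarz inequality to the eigenvalues of the frame operator $S = FF^{*}$, after which both the bound and its equality condition fall out. Write $\mu_{1},\dots,\mu_{d}\geq 0$ for the eigenvalues of the $d\times d$ positive semidefinite matrix $S$. Two elementary facts drive the argument. First, since $\tr(FF^{*}) = \tr(F^{*}F)$ and the diagonal entries of the Gram matrix $G = F^{*}F$ are $\|f_{i}\|^{2} = 1$, we have the trace constraint $\sum_{j=1}^{d}\mu_{j} = \tr S = N$. Second, the nonzero eigenvalues of $G$ coincide with those of $S$, so the defining identity \eqref{equation :: frame potenial} yields $FP(\{f_{i}\}_{i=1}^{N}) = \tr G^{2} = \tr S^{2} = \sum_{j=1}^{d}\mu_{j}^{2}$.

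With these in hand the bound is immediate: applying Cauchy--Schwarz to $(\mu_{1},\dots,\mu_{d})$ and $(1,\dots,1)$ gives
\[
N^{2} = \Bigl(\sum_{j=1}^{d}\mu_{j}\Bigr)^{2} \leq d\sum_{j=1}^{d}\mu_{j}^{2} = d\cdot FP(\{f_{i}\}_{i=1}^{N}),
\]
and hence $FP(\{f_{i}\}_{i=1}^{N}) \geq N^{2}/d$.

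The part that deserves care is the equality analysis. Equality in Cauchy--Schwarz forces $\mu_{1} = \cdots = \mu_{d}$, and together with $\sum_{j}\mu_{j} = N$ this pins down $\mu_{j} = N/d$ for all $j$; being self-adjoint with a single eigenvalue, $S$ must then equal $\tfrac{N}{d}I$. Thus equality holds precisely when $FF^{*} = \tfrac{N}{d}I$, i.e.\ when $\{f_{i}\}_{i=1}^{N}$ is a unit norm tight frame. Note the common eigenvalue $N/d$ is strictly positive, which automatically forces $S$ to have full rank $d$, so a minimizing set necessarily spans. In the boundary case $N = d$ a unit norm tight frame satisfies $FF^{*} = I$ and consists of $d$ orthonormal unit vectors, hence is an orthonormal basis. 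I do not expect a genuine obstacle beyond this bookkeeping; the one point worth stating explicitly is the identity $\tr G^{2} = \tr S^{2}$, which holds because passing from the $N\times N$ Gram matrix to the $d\times d$ frame operator only discards zero eigenvalues and so leaves $\sum_{j}\mu_{j}^{2}$ unchanged.
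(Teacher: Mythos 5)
Your proof is correct. The paper does not prove this result itself---it imports it as Theorem 6.2 of \cite{BF03}---and your argument (pass to the eigenvalues $\mu_{j}$ of the frame operator $S = FF^{*}$, use the trace constraint $\tr S = N$ coming from the unit-norm hypothesis, apply Cauchy--Schwarz to get $\sum_{j}\mu_{j}^{2}\geq N^{2}/d$, and read off the equality case as $S = \tfrac{N}{d}I$) is precisely the standard argument behind that citation, with the points needing care ($\tr G^{2} = \tr S^{2}$, full rank of $S$ in the equality case so the minimizer genuinely spans, and the orthonormal-basis conclusion when $N = d$) all handled correctly.
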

\section{Constructing low coherence UNTFs with block designs}\label{section::supermaximal_approx_of_ETFs}
\subsection{Optimal sets to add to ETFs}

As already mentioned in \cref{subsection :: background and motivation}, ETFs are useful due to their minimal maximum coherence between distinct vectors. Unfortunately, even though $(d+1, d)$ ETFs always exist,  an $(N,d)$ ETF may not always exist if $N>d+1$. Since ETFs already minimize the maximum coherence among pairs of its vectors, a natural approach to approximating an $(N,d)$ ETF, when $N>d+1$, is to add an \textit{optimal} set of vectors to a $(d+1,d)$ ETF in such a way that the resulting set is still a UNTF with maximum coherence among its distinct vectors as close to the Welch bound as possible. By definition, a $(d+1, d)$ ETF is a UNTF. Since the union of two UNTFs is another UNTF for the same space, the optimal set to be added will be taken to be a UNTF to ensure that the resulting set is a UNTF. This is accomplished below in \cref{theorem::hadamard-design-construction} and \cref{theorem:: 2-designs : maximum cross correlation}.
In the absence of an $(N,d)$ ETF, the low coherence UNTFs provided by \cref{theorem::hadamard-design-construction} and \cref{theorem:: 2-designs : maximum cross correlation} can be thought of as approximations of a corresponding $(N,d)$ ETF, if it were to exist.

From (\ref{WelchInEq}), for a $(d+1,d)$ ETF, the modulus of the inner product of any two distinct vectors is $\frac{1}{d}$.
In what follows, $\veccoll{\vec{f}}{i}{i=1}{d+1}\subset\RR^{d}$ will denote a $(d+1,d)$ ETF where $\dotprod{\vec{f}_{i}}{\vec{f}_{j}}=-\frac{1}{d}$ for $i\neq j$. For any $d$, this ETF can be constructed based on an explicit construction given in~\cite{Dattaa}.
\begin{definition}\label{definition::optimal-frames-to-add-to-etf}
	Let $\mathcal{G}_{0}\subset\RR^{d}$ be a UNTF. Let $\veccoll{\vec{f}}{i}{i=1}{d+1}\subset\RR^{d}$ be a $(d+1,d)$ ETF where $\dotprod{\vec{f}_{i}}{\vec{f}_{j}}=-\frac{1}{d}$ for $i\neq j$.
	$\mathcal{G}_{0}$ is said to be \textit{optimal with respect to ${\set{\vec{f}_{i}}_{i=1}^{d+1}}$} if, among all possible UNTFs $\mathcal{G}\subset\RR^{d}$, $\mathcal{G}_{0}$ minimizes the maximum coherence between distinct vectors of $\veccoll{\vec{f}}{i}{i=1}{d+1}\cup\mathcal{G}$.
\end{definition}
%
Given a set of unit vectors containing the real ETF $\set{f_{i}}_{i=1}^{d+1}$ of Definition~\ref{definition::optimal-frames-to-add-to-etf}, \cref{theorem::minimal-worst-case-for-adding-to-etf} below gives a lower bound for the maximum coherence between distinct vectors of this set. Using \cref{theorem::minimal-worst-case-for-adding-to-etf}, one can determine a set that is optimal with respect to $\set{f_{i}}_{i=1}^{d+1}$ in the sense of Definition~\ref{definition::optimal-frames-to-add-to-etf}.
The bound in (\ref{equation :: minimal worst case for adding to etf}) is similar to the orthoplex bound~\cite{bodmann2015achieving}.
\begin{theorem}\label{theorem::minimal-worst-case-for-adding-to-etf}
	Let $\{\vec{f}_{i}\}_{i=1}^{d+1}\subset\mathbb{R}^{d}$ denote an ETF satisfying $\langle \vec{f}_{i},\vec{f}_{j}\rangle =-\frac{1}{d}$ for $1\leq i<j\leq d+1$.
	Let $\vec{f}\in\mathbb{R}^{d}$ be a unit vector.
	Then
	\begin{equation}\label{equation :: minimal worst case for adding to etf}
		\max_{\mathclap{1\leq i\leq d+1}}|\langle \vec{f},\vec{f}_{i}\rangle| \geq \frac{1}{\sqrt{d}}.
	\end{equation}
\end{theorem}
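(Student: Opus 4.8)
The plan is to reduce the claim to a one-line averaging argument, using only the tight frame property of $\set{\vec{f}_{i}}_{i=1}^{d+1}$ and not its equiangularity. First I would set $c_{i}=\langle \vec{f},\vec{f}_{i}\rangle$ for $1\leq i\leq d+1$; since the setting is real, each $c_{i}$ is a real number and $|\langle \vec{f},\vec{f}_{i}\rangle|=|c_{i}|$, so the goal becomes showing that at least one $c_{i}^{2}$ is at least $1/d$.

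The key step is a Parseval-type identity coming from property (i) in the definition of an ETF. Because $\set{\vec{f}_{i}}_{i=1}^{d+1}$ is a $(d+1,d)$ ETF, its synthesis operator $F$ satisfies $FF^{*}=\frac{d+1}{d}I$. Writing $FF^{*}=\sum_{i=1}^{d+1}\vec{f}_{i}\vec{f}_{i}^{*}$, applying this operator to $\vec{f}$ and pairing the result with $\vec{f}$ gives
\begin{equation*}
\sum_{i=1}^{d+1}|\langle \vec{f},\vec{f}_{i}\rangle|^{2}=\langle FF^{*}\vec{f},\vec{f}\rangle=\frac{d+1}{d}\,\|\vec{f}\|^{2}=\frac{d+1}{d},
\end{equation*}
where the last equality uses that $\vec{f}$ is a unit vector. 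Hence $\sum_{i=1}^{d+1}c_{i}^{2}=\frac{d+1}{d}$.

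With this identity in hand, the bound is immediate from the pigeonhole principle: the maximum of the $d+1$ nonnegative quantities $c_{1}^{2},\dots,c_{d+1}^{2}$ is at least their average, so
\begin{equation*}
\max_{1\leq i\leq d+1}c_{i}^{2}\;\geq\;\frac{1}{d+1}\sum_{i=1}^{d+1}c_{i}^{2}\;=\;\frac{1}{d+1}\cdot\frac{d+1}{d}\;=\;\frac{1}{d}.
\end{equation*}
Taking square roots then yields $\max_{1\leq i\leq d+1}|\langle \vec{f},\vec{f}_{i}\rangle|\geq\frac{1}{\sqrt{d}}$, which is exactly \eqref{equation :: minimal worst case for adding to etf}.

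I do not expect a genuine obstacle, since the computation is short. The only point worth flagging is conceptual rather than technical: the argument never uses the value $-\frac{1}{d}$ or the equiangularity, only that $\set{\vec{f}_{i}}_{i=1}^{d+1}$ is a unit norm tight frame of $d+1$ vectors, which already forces the frame bound to equal $\frac{d+1}{d}$. Consequently the same bound would hold verbatim for any $(d+1,d)$ UNTF. A useful byproduct to note for the subsequent theorems is that equality in \eqref{equation :: minimal worst case for adding to etf} forces $c_{i}^{2}=\frac{1}{d}$ for \emph{every} $i$, i.e.\ each inner product has modulus exactly $\frac{1}{\sqrt{d}}$; this equality condition is presumably what pins down the optimal sets constructed later.
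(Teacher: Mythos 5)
Your proposal is correct and is essentially identical to the paper's own proof: both apply the tight frame identity $\sum_{i=1}^{d+1}|\langle \vec{f},\vec{f}_{i}\rangle|^{2}=\frac{d+1}{d}\|\vec{f}\|^{2}$ and then bound the sum by $(d+1)$ times its maximum term. Your closing observation that only the UNTF property is used, not equiangularity, is exactly the content of the paper's Remark~\ref{remark::minimal-worst-case-for-adding-to-etf}.
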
	
\begin{proof}
Since $\vec{f}$ is a unit vector and $\{\vec{f}_{i}\}_{i=1}^{d+1}$ is a UNTF,
	\[
		\frac{d+1}{d} = \frac{d+1}{d}\norm{f}^{2} = \sum_{i=1}^{d+1}|\langle \vec{f},\vec{f}_{i}\rangle|^{2}.
	\]
The sum $\sum_{i=1}^{d+1}|\langle \vec{f},\vec{f}_{i}\rangle|^{2}$ must be bounded above by $(d+1)\max_{i}|\langle \vec{f},\vec{f}_{i}\rangle|^{2}$, which gives
	\[
		\frac{d+1}{d} \leq (d+1)\max_{i}|\langle \vec{f},\vec{f}_{i}\rangle|^{2}.
	\]
	Simplifying this gives
	\[
		\frac{1}{\sqrt{d}}\leq\max_{i}|\langle \vec{f},\vec{f}_{i}\rangle|.
	\]
\end{proof}
\begin{remark}\label{remark::minimal-worst-case-for-adding-to-etf}
	Note that \cref{theorem::minimal-worst-case-for-adding-to-etf} only relies on the fact that $\{f_{i}\}_{i=1}^{d+1}$ is a UNTF.
	Hence the result still holds if the given $(d+1,d)$ ETF is replaced with an arbitrary $(N,d)$ UNTF, although this fact will not be required in this paper.
	It is worthwhile to note here that if $d$ is odd then~\cref{theorem::k-angle-near-optimality} shows that the inequality in~\cref{theorem::minimal-worst-case-for-adding-to-etf} is sharp for the $(d+1,d)$ ETF given in the statement of this theorem.
\end{remark}
%
\noindent
If $\mathcal{G}_{0}$ is a UNTF such that the maximum coherence between distinct vectors of $\mathcal{G}_{0}\cup\set{f_{i}}_{i=1}^{d+1}$ is $\frac{1}{\sqrt{d}}$, then \cref{theorem::minimal-worst-case-for-adding-to-etf} implies that $\mathcal{G}_{0}$ is optimal with respect to $\set{f_{i}}_{i=1}^{d+1}$. It is possible to obtain such UNTFs using the construction given in \cref{theorem::k-angle-near-optimality} below. This is then further developed in \cref{theorem::hadamard-design-construction}, \cref{theorem:: 2-designs : UNTFs}, and \cref{theorem:: 2-designs : maximum cross correlation}.
\begin{theorem}\label{theorem::k-angle-near-optimality}
	Let $d\in\NN$, set $k = \ceil*{\frac{d+1}{2}}$.
	Let $\{\vec{f}_{i}\}_{i=1}^{d+1}$ denote the $(d+1, d)$ ETF satisfying $\langle \vec{f}_{i},\vec{f}_{j}\rangle = -\frac{1}{d}$ for $i\neq j$.
	Let $\Lambda$ denote a subset of $\set{1,\dots,d+1}$ of size $k$, and define the vector $g$ by
	\[
		g = \frac{\sum_{i\in\Lambda}f_{i}}{\norm{\sum_{i\in\Lambda}f_{i}}}.
	\]
	Then
	\[
		\max_{\mathclap{1\leq j\leq d+1}}|\langle \vec{g},\vec{f}_{j}\rangle| = \frac{1}{\sqrt{d}} \quad\text{if $d$ is odd,}
	\]
	and
	\[
		\max_{\mathclap{1\leq j\leq d+1}}|\langle \vec{g},\vec{f}_{j}\rangle| \leq \sqrt{\frac{d+2}{d^{2}}} \quad\text{if $d$ is even.}
	\]
\end{theorem}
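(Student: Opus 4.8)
The plan is to reduce everything to an explicit evaluation of $\langle g,f_j\rangle$, for which I first compute the squared norm of the unnormalized sum $S:=\sum_{i\in\Lambda}f_i$ and then the inner products $\langle S,f_j\rangle$ in the two cases $j\in\Lambda$ and $j\notin\Lambda$. Expanding $\|S\|^2=\sum_{i,i'\in\Lambda}\langle f_i,f_{i'}\rangle$ and using that the $k$ diagonal terms each equal $1$ while the $k(k-1)$ off-diagonal terms each equal $-\tfrac1d$ gives $\|S\|^2=k-\tfrac{k(k-1)}{d}=\tfrac{k(d-k+1)}{d}$.

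Next I compute $\langle S,f_j\rangle$. For $j\in\Lambda$, isolating the $i=j$ term from the remaining $k-1$ terms gives $\langle S,f_j\rangle=1-\tfrac{k-1}{d}=\tfrac{d-k+1}{d}$; for $j\notin\Lambda$ every summand is off-diagonal, so $\langle S,f_j\rangle=-\tfrac{k}{d}$. Dividing by $\|S\|$ then yields
\[
  |\langle g,f_j\rangle|=\sqrt{\frac{d-k+1}{kd}}\ \ (j\in\Lambda),
  \qquad
  |\langle g,f_j\rangle|=\sqrt{\frac{k}{(d-k+1)d}}\ \ (j\notin\Lambda).
\]
Whether the first or the second value is larger is decided entirely by the sign of $k-(d-k+1)$, which is where the choice $k=\lceil\frac{d+1}{2}\rceil$ enters.

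The final step is to insert $k=\lceil\frac{d+1}{2}\rceil$ and split on the parity of $d$. When $d$ is odd, $k=\tfrac{d+1}{2}$, so $d-k+1=k$ and both displayed expressions collapse to $\sqrt{1/d}=1/\sqrt d$; since $\Lambda$ and its complement are both nonempty (the complement has $(d+1)-k=k\geq1$ elements), this common value is attained and the maximum equals $1/\sqrt d$ exactly. When $d$ is even, $k=\tfrac{d+2}{2}$, so $d-k+1=\tfrac d2$, giving $1/\sqrt{d+2}$ for $j\in\Lambda$ and $\sqrt{(d+2)/d^2}$ for $j\notin\Lambda$. Comparing squares, $(d+2)/d^2>1/(d+2)$ because $(d+2)^2>d^2$, so the larger of the two---and hence the maximum---is $\sqrt{(d+2)/d^2}$, which establishes the claimed bound (in fact with equality, since the complement of $\Lambda$ is again nonempty).

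The argument is elementary throughout, and I do not anticipate a genuine obstacle; the only place demanding care is the parity split, where one must simplify $d-k+1$ correctly and, in the even case, carry out the comparison of the two inner-product magnitudes to identify which one realizes the maximum.
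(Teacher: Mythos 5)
Your proposal is correct and follows essentially the same route as the paper's proof: compute $\|\sum_{i\in\Lambda}f_i\|^2 = k - \frac{k(k-1)}{d}$, evaluate $\langle g, f_j\rangle$ separately for $j\in\Lambda$ and $j\notin\Lambda$, then split on the parity of $d$ and compare the two magnitudes. The only (harmless) difference is that you additionally observe the even-case bound is attained with equality, slightly sharpening the paper's stated inequality.
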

\begin{proof}
	First, note that
	\[
		\norm*{\sum_{i\in\Lambda}f_{i}}^{2}
		= \dotprod{\sum_{i\in\Lambda}f_{i}}{\sum_{j\in\Lambda}f_{j}} \\
		= k - \frac{k(k-1)}{d}.
	\]
	Hence $\norm{\sum_{i\in\Lambda}f_{i}} = \sqrt{\frac{k(d+1-k)}{d}}$, and it follows that
	\begin{align*}
		\dotprod{\vec{g}}{\vec{f}_{j}}
		&= \sqrt{\frac{d}{k(d+1-k)}}\left\langle \sum_{i\in\Lambda}\vec{f}_{i},\vec{f}_{j}\right\rangle\\
		&= \begin{cases}
			\sqrt{\frac{d+1-k}{dk}}&\text{if $j=i$ for some $i\in\Lambda$,} \\
			-\sqrt{\frac{k}{d(d+1-k)}}&\text{otherwise.}
		\end{cases}
	\end{align*}
	Now consider the case $k = \ceil*{\frac{d+1}{2}}$.
	If $d$ is odd then $k = \frac{d+1}{2}$ and
	$\max_{j}\abs*{\dotprod{\vec{g}}{\vec{f}_{j}}} = \frac{1}{\sqrt{d}}.$
	If $d$ is even, then $k = \frac{d+2}{2}$ and
	\[
		\langle g, f_{j}\rangle = \begin{cases}
			\frac{1}{\sqrt{d+2}} &\text{if $j=i$ for some $i\in\Lambda$,} \\
			-\sqrt{\frac{d+2}{d^{2}}} &\text{otherwise.}	
		\end{cases}
	\]
	Since $\frac{1}{\sqrt{d+2}}\leq\sqrt{\frac{d+2}{d^{2}}}$, $\max_{j}\abs*{\dotprod{\vec{g}}{\vec{f}_{j}}} \leq \sqrt{\frac{d+2}{d^{2}}}$ if $d$ is even.
\end{proof}	
\cref{theorem::k-angle-near-optimality} shows how to construct vectors $g$ whose maximum coherence with vectors of  the given ETF $\set{f_{i}}_{i=1}^{d+1}$ either meets or comes very close to meeting the optimal bound $\frac{1}{\sqrt{d}}$ of \cref{theorem::minimal-worst-case-for-adding-to-etf}. In order to construct UNTFs that are optimal with respect to the ETF $\set{f_{i}}_{i=1}^{d+1}\subset\RR^{d}$ satisfying $\dotprod{f_{i}}{f_{j}} = -\frac{1}{d}$ for $i\neq j$, one can then consider UNTFs of the form $\set{g_{i}}\subset\mathbb{R}^{d}$, where by the proof of \cref{theorem::k-angle-near-optimality},
\begin{equation}\label{equation :: k-angle vectors}
	g_{i} = \frac{\sum_{j\in\Lambda_{i}}f_{j}}{\norm{\sum_{j\in\Lambda_{i}}f_{j}}} = \sqrt{\frac{d}{k(d+1-k)}}\sum_{j\in\Lambda_{i}}f_{j},
\end{equation}
and $\set{\Lambda_{i}}$ is a collection of subsets of $\set{1,\dots,d+1}$ of size $k = \ceil{\frac{d+1}{2}}$.
The problem now is to determine how to select the subsets $\set{\Lambda_{i}}$ so that the vectors $\set{g_{i}}\subset\mathbb{R}^{d}$ satisfy the following constraints:
\begin{enumerate}[(i)]
	\item the maximum coherence of $\set{g_{i}}\cup\set{f_{i}}_{i=1}^{d+1}$ is $\frac{1}{\sqrt{d}}$,
	\item the set $\set{g_{i}} $ is a UNTF.
\end{enumerate}
Note that the constraint in (ii) is equivalent to $\set{g_{i}}\cup\set{f_{i}}_{i=1}^{d+1}$ being a UNTF since the union of two UNTFs is another UNTF.
To see how to choose subsets $\set{\Lambda_{i}}$ that minimize the maximum coherence, the following computations will be useful.
Recall that, for $i\neq j$, $\abs{\dotprod{f_{i}}{f_{j}}} = \frac{1}{d}$.
Also, due to \cref{theorem::minimal-worst-case-for-adding-to-etf,theorem::k-angle-near-optimality}, $\frac{1}{\sqrt{d}} \leq \max_{i,j}\abs*{\dotprod{g_{i}}{f_{j}}} \leq \sqrt{\frac{d+2}{d^{2}}}$.
It remains to determine $\max_{i \neq j}\abs{\dotprod{g_{i}}{g_{j}}}$.
Due to (\ref{equation :: k-angle vectors}),
\[
	|\langle \vec{g}_{i},\vec{g}_{j}\rangle| = \frac{d}{k(d+1-k)}\left| \sum_{m\in\Lambda_{i}}\sum_{n\in\Lambda_{j}}\dotprod{f_{m}}{f_{n}}\right|.
\]
For $i \neq j,$ let $l = |\Lambda_{i}\cap\Lambda_{j}|$.
Then out of the $k^{2}$ terms in the double summation, there will be $l$ terms that equal $1$ and $k^{2}-l$ terms that equal $-\frac{1}{d}$.
Therefore, for $i \neq j,$
\begin{equation}\label{equation:: k-angle cross correlation : number of intersections}
	 |\langle \vec{g}_{i},\vec{g}_{j}\rangle| = \frac{d}{k(d+1-k)}\left| l - \frac{1}{d}(k^{2}-l)\right| = \frac{d+1}{k(d+1-k)}\abs*{l - \frac{k^{2}}{d+1}}.
\end{equation}
Thus if $l$ is the closest integer to $\frac{k^{2}}{d+1}$, then the above inner product is minimized.
The next example shows this approach in action.
\begin{example}[An optimal UNTF to add to a $(4,3)$ ETF]\label{example:: optimal UNTF : R^3}
	Let $\{\vec{f}_{i}\}_{i=1}^{4}\subset\mathbb{R}^{3}$ be an ETF satisfying $\langle \vec{f}_{i},\vec{f}_{j}\rangle=-\frac{1}{3}$ for $1\leq i<j\leq 4$.
	Such an ETF can be constructed from the vertices of a regular tetrahedron inscribed within the unit sphere.
	Set $k = \ceil*{\frac{d+1}{2}} = 2$.
	Then the goal is to find a collection of subsets of $\{1,2,3,4\}$ of size $k=2$ such that the intersection of any two members has $l=\frac{k^{2}}{d+1} = 1$ element.
	One such collection is given by $\{\{1,2\}, \{1,3\}, \{1,4\}\}$.

	Now define $\{\vec{g}_{i}\}_{i=1}^{3}$ by
	\[
		\vec{g}_{1} = \frac{\vec{f}_{1}+\vec{f}_{2}}{\|\vec{f}_{1}+\vec{f}_{2}\|},\ \vec{g}_{2} = \frac{\vec{f}_{1}+\vec{f}_{3}}{\|\vec{f}_{1}+\vec{f}_{3}\|},\ \vec{g}_{3} = \frac{\vec{f}_{1}+\vec{f}_{4}}{\|\vec{f}_{1}+\vec{f}_{4}\|}.
	\]
	It can be checked that $\set{\vec{g}_{i}}_{i=1}^{3}$ forms an orthonormal basis for $\mathbb{R}^{3}.$ This implies that $\{\vec{f}_{i}\}_{i=1}^{4}\cup\{\vec{g}_{i}\}_{i=1}^{3}$
	is a $(7,3)$ UNTF. By \cref{theorem::k-angle-near-optimality},  the maximum coherence between distinct vectors of the set $\{\vec{f}_{i}\}_{i=1}^{4}\cup\{\vec{g}_{i}\}_{i=1}^{3}$  is $\frac{1}{\sqrt{3}}$ which is the lower bound given in \cref{theorem::minimal-worst-case-for-adding-to-etf}.
	By (\ref{WelchInEq}), the corresponding Welch bound for a collection of $7$ unit vectors in $\RR^{3}$ is $\sqrt{\frac{7-3}{3\cdot6}} = \frac{\sqrt{2}}{3} \approx .4714,$ whereas $\frac{1}{\sqrt{3}}\approx.5774$.
	However, there does not exist a $(7,3)$ ETF. In fact, the largest ETF in $\RR^{3}$ is a $(6,3)$ ETF, and so no collection of $7$ unit vectors in $\RR^{3}$ has maximum coherence $\frac{\sqrt{2}}{3}$. Further, by \cref{theorem::minimal-worst-case-for-adding-to-etf}, the maximum coherence between distinct vectors of the set $\{\vec{f}_{i}\}_{i=1}^{4}\cup\{\vec{g}_{i}\}_{i=1}^{3}$ is the smallest possible among all UNTFs containing the starting $(4, 3)$ ETF $\{\vec{f}_{i}\}_{i=1}^{4}.$ In the absence of a $(7,3)$ ETF, the set
$\{\vec{f}_{i}\}_{i=1}^{4}\cup\{\vec{g}_{i}\}_{i=1}^{3}$ can be thought of as a low coherence UNTF of $7$ vectors in $\mathbb{R}^3$ that is an approximation of the hypothetical $(7,3)$ ETF.
\end{example}
\noindent The set $\{\vec{f}_{i}\}_{i=1}^{4}\cup\{\vec{g}_{i}\}_{i=1}^{3}$  obtained in Example~\ref{example:: optimal UNTF : R^3} formed a tight frame due to the fact that $\{\vec{g}_{i}\}_{i=1}^{3}$ turned out to be an orthonormal basis and hence a UNTF. However, this is not guaranteed in general. 
One way to ensure tightness is by utilizing \textit{block designs}~\cite{moore2013difference}, as described in the following subsection.
\subsection[Using block designs to determine optimal sets to add to ETFs]{Using block designs to determine optimal sets to add to ETFs}\label{subsection :: using block designs}
\begin{definition}\label{definition:: 2-designs}
	Let $X$ denote a set containing $v$ points and suppose there is a collection $\mathcal{B}$ of subsets (``blocks'') of $X$ where each block has size $k$.
	If for any $x\in X$ there are precisely $r$ blocks in $\mathcal{B}$ containing $x$, and for any distinct $x,y\in X$ there are precisely $\lambda$ blocks containing $\{x,y\}$, then $\mathcal{B}$ is said to be a $(v,k,\lambda)$ \textit{block design on $X$}, or more simply a block design.
\end{definition}
\noindent Example~\ref{example:: optimal UNTF : R^3} indirectly makes use of \textit{symmetric designs} and \textit{Hadamard designs}~\cite{colbourn2010crc}, which are particular examples of block designs \footnote{Particular block designs known as \textit{Steiner systems} have been used to construct equiangular tight frames~\cite{Fickus2012}.}. This is explained below. 
\begin{definition}\label{definition::symmetric-hadamard-designs}
	Let $\mathcal{B}$ denote a $(v,k,\lambda)$ block design.
	$\mathcal{B}$ is a symmetric design if $|\mathcal{B}| = v$.
	$\mathcal{B}$ is a Hadamard design if it is a symmetric design and the parameters $v,k,\lambda$ satisfy $v = 4n-1, k = 2n-1$ and $\lambda = n-1$ or $v = 4n-1, k = 2n$ and $\lambda = n$.
\end{definition}
\noindent A $(4n-1, 2n-1, n-1)$ or $(4n-1,2n,n)$ Hadamard design exists if and only if there exists a corresponding $4n\times4n$ real Hadamard matrix~\cite{colbourn2010crc}.
Hadamard designs, like all symmetric designs, satisfy the following important property.
\begin{lemma}[\cite{colbourn2010crc}]\label{lemma:symmetric-design-intersection-number}
	Let $\mathcal{B}$ denote a $(v,k,\lambda)$ symmetric design.
	Then any two distinct blocks in $\mathcal{B}$ have precisely $\lambda$ elements in common. 
\end{lemma}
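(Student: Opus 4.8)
The plan is to use the point–block incidence matrix of the design together with a short linear-algebra argument, which is the standard route for this dual intersection property. Let $A$ be the $v\times|\mathcal{B}|$ incidence matrix whose rows are indexed by the points of $X$ and whose columns are indexed by the blocks, with the $(x,B)$ entry equal to $1$ exactly when $x\in B$. The key observation is that the $(x,y)$ entry of $AA^{T}$ counts the number of blocks containing both $x$ and $y$: this equals $r$ on the diagonal and $\lambda$ off the diagonal, so $AA^{T} = (r-\lambda)I + \lambda J$, where $J$ is the all-ones matrix. Dually, the $(B,B')$ entry of $A^{T}A$ is precisely $|B\cap B'|$, which is the block size $k$ when $B=B'$ and the quantity I want to identify when $B\neq B'$. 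Thus the lemma reduces to showing that $A^{T}A = (k-\lambda)I + \lambda J$.

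First I would record the two standard counting identities for a $(v,k,\lambda)$ design, namely $|\mathcal{B}|\,k = v\,r$ (counting point–block incidences two ways) and $r(k-1)=\lambda(v-1)$ (fixing a point and counting the flags through it). Since $\mathcal{B}$ is symmetric, $|\mathcal{B}|=v$, and the first identity forces $r=k$; in particular $A$ is square. Because every block has $k$ points and every point lies in $r=k$ blocks, both the row sums and the column sums of $A$ equal $k$, which gives $AJ = JA = kJ$.

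Next I would show that $A$ is nonsingular. Taking determinants in $AA^{T}=(r-\lambda)I+\lambda J$ and using that $(r-\lambda)I+\lambda J$ has eigenvalue $r-\lambda+v\lambda$ once and $r-\lambda$ with multiplicity $v-1$, together with the simplification $r+(v-1)\lambda = r + r(k-1) = rk$ coming from the second counting identity, yields $\det(AA^{T}) = (r-\lambda)^{v-1}\,rk \neq 0$ for a nontrivial design (where $r>\lambda$). Hence $A$ is invertible. Invertibility then lets me transfer the relation from $AA^{T}$ to $A^{T}A$: from $AJ=JA$ I get $A^{-1}J=JA^{-1}$, and therefore
\[
A^{T}A = A^{-1}(AA^{T})A = A^{-1}\bigl((r-\lambda)I+\lambda J\bigr)A = (r-\lambda)I + \lambda\,A^{-1}JA = (r-\lambda)I + \lambda J.
\]
With $r=k$ this reads $A^{T}A = (k-\lambda)I + \lambda J$, whose off-diagonal entries all equal $\lambda$; since those entries are exactly $|B\cap B'|$ for distinct blocks $B,B'$, the claim follows.

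The main obstacle is the nonsingularity-and-commuting step rather than the incidence bookkeeping: the entire argument hinges on being able to conjugate $AA^{T}$ into $A^{T}A$, and this is precisely where the symmetry hypothesis $|\mathcal{B}|=v$ enters, first to force $r=k$ (so that $A$ is square and $J$ commutes with $A$) and then through the determinant computation that guarantees $A$ is invertible. Once these are in place, reading off the off-diagonal entries of $A^{T}A$ is immediate.
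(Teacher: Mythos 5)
Your proof is correct, but note that there is no internal proof to compare against: the paper states this lemma as a known result imported from~\cite{colbourn2010crc} and never proves it. What you have written is the classical linear-algebra argument for the dual intersection property of symmetric designs (essentially Ryser's proof), and every step checks out: the double-counting identities $bk=vr$ and $r(k-1)=\lambda(v-1)$; the relation $AA^{T}=(r-\lambda)I+\lambda J$; the deduction $r=k$ from $b=v$, which makes $A$ square and gives $AJ=JA=kJ$; the determinant computation $\det(AA^{T})=rk\,(r-\lambda)^{v-1}$ via the eigenvalues $rk$ (once) and $r-\lambda$ (with multiplicity $v-1$); and the conjugation $A^{T}A=A^{-1}(AA^{T})A=(r-\lambda)I+\lambda A^{-1}JA=(k-\lambda)I+\lambda J$, whose off-diagonal entries are exactly the pairwise block intersection sizes. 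The one point to tighten is your nonsingularity caveat: $r>\lambda$ is not a free-floating ``nontriviality'' assumption, it is equivalent, via $r(k-1)=\lambda(v-1)$, to $k<v$. So either assume the design is incomplete ($k<v$), which is the standard convention, or add one sentence observing that when $k=v$ every block equals $X$, so distinct blocks meet in $v$ points while $\lambda=b=v$, and the conclusion holds trivially. (Your argument does cover $\lambda=0$ designs, such as the $(3,1,0)$ Hadamard design the paper invokes after \cref{theorem::hadamard-design-construction}, since there $r-\lambda=1>0$.)
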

\noindent Hadamard designs, specifically $(4n-1,2n-1,n-1)$ Hadamard designs, can now be used to extend the construction given in Example~\ref{example:: optimal UNTF : R^3} to other dimensions besides $d=3$.
\begin{theorem}\label{theorem::hadamard-design-construction}
	Let $d = 4n-1$ for some $n$ and suppose that there exists a $(4n-1,2n-1,n-1) = (d, \frac{d+1}{2} - 1, \frac{d+1}{4} - 1)$ Hadamard design $\mathcal{B} = \{B_{i}\}_{i=1}^{d}$ on the set $\{2,3,\ldots,d+1\}$.
	Let $\{f_{i}\}_{i=1}^{d+1}\subset\RR^{d}$ denote an ETF satisfying $\langle f_{i},f_{j}\rangle = -\frac{1}{d}$ for $1\leq i<j\leq d+1$.
	Define $\Lambda_{i} = \{1\}\cup B_{i}$ for $1\leq i\leq d$ and for each $i$ construct the vector $g_{i}$ by
	\[
		g_{i} = \frac{\sum_{j\in\Lambda_i}f_{j}}{\norm{\sum_{j\in\Lambda_i}f_{j}}} . 
	\]
	Then $\{f_{i}\}_{i=1}^{d+1}\cup\{g_{i}\}_{i=1}^{d}$ is a $(2d+1,d)$ UNTF, and the maximum coherence between distinct vectors of this UNTF is $\frac{1}{\sqrt{d}}$.
\end{theorem}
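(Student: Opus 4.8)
The plan is to sort the inner products among the $2d+1$ vectors into three families and evaluate each using the results already established. Since $d = 4n-1$ is odd, we have $k := |\Lambda_i| = 1 + (2n-1) = 2n = \frac{d+1}{2} = \ceil*{\frac{d+1}{2}}$, so each $g_i$ is exactly a normalized $k$-fold sum of the type analyzed in \cref{theorem::k-angle-near-optimality}; in particular each $g_i$ is a unit vector. The $f$--$f$ inner products have modulus $\frac{1}{d}$ by hypothesis. For the $g$--$f$ inner products, \cref{theorem::k-angle-near-optimality} applied with $d$ odd gives $\max_j|\langle g_i, f_j\rangle| = \frac{1}{\sqrt{d}}$ (when $k = \frac{d+1}{2}$ the two cases in the proof of that theorem both collapse to $\pm\frac{1}{\sqrt{d}}$). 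It therefore remains to handle the $g$--$g$ inner products and to verify tightness.

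The crux is the computation of the intersection numbers $l = |\Lambda_i \cap \Lambda_j|$ for $i \neq j$, which feed into \eqref{equation:: k-angle cross correlation : number of intersections}. Because each block $B_i$ is a subset of $\{2,\dots,d+1\}$, the point $1$ lies in every $\Lambda_i$ but in no $B_i$, so $\Lambda_i \cap \Lambda_j = \{1\} \cup (B_i \cap B_j)$. Since $\mathcal{B}$ is a symmetric (Hadamard) design, \cref{lemma:symmetric-design-intersection-number} yields $|B_i \cap B_j| = \lambda = n-1$, and hence $l = 1 + (n-1) = n$. The key observation is that these parameters are precisely calibrated: with $k = 2n$ and $d+1 = 4n$ one computes $\frac{k^2}{d+1} = \frac{4n^2}{4n} = n = l$, so the factor $\abs*{l - \frac{k^2}{d+1}}$ in \eqref{equation:: k-angle cross correlation : number of intersections} vanishes. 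Thus $\langle g_i, g_j\rangle = 0$ for all $i \neq j$; that is, $\{g_i\}_{i=1}^{d}$ is an orthonormal set. I expect this calibration step to be the main obstacle, in the sense that it is the only place where the specific Hadamard parameters $(4n-1, 2n-1, n-1)$ enter; everything else is bookkeeping built on top of the earlier results.

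With orthogonality in hand the conclusion is immediate. A symmetric design on $v = d$ points has exactly $|\mathcal{B}| = v = d$ blocks, so $\{g_i\}_{i=1}^{d}$ consists of $d$ orthonormal vectors in $\mathbb{R}^d$ and is therefore an orthonormal basis, in particular a UNTF. Since $\{f_i\}_{i=1}^{d+1}$ is a $(d+1,d)$ ETF and hence a UNTF, and the union of two UNTFs for the same space is again a UNTF, the set $\{f_i\}_{i=1}^{d+1} \cup \{g_i\}_{i=1}^{d}$ is a $(2d+1,d)$ UNTF. Finally, collecting the three families of inner products, with moduli $\frac{1}{d}$, $\frac{1}{\sqrt{d}}$, and $0$ respectively, the maximum coherence between distinct vectors equals $\frac{1}{\sqrt{d}}$, since $\frac{1}{\sqrt{d}} > \frac{1}{d}$ for $d > 1$. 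This matches the lower bound of \cref{theorem::minimal-worst-case-for-adding-to-etf}, confirming that the construction is optimal with respect to $\{f_i\}_{i=1}^{d+1}$.
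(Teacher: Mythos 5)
Your proposal is correct and follows essentially the same route as the paper: both use \eqref{equation:: k-angle cross correlation : number of intersections} together with the symmetric-design intersection property (\cref{lemma:symmetric-design-intersection-number}) to show $l = n = \frac{k^{2}}{d+1}$, concluding that $\{g_{i}\}_{i=1}^{d}$ is an orthonormal basis, and then invoke \cref{theorem::k-angle-near-optimality} for the $g$--$f$ coherences and the union-of-UNTFs fact for tightness. Your write-up is in fact slightly more explicit than the paper's at the one delicate point, namely spelling out that $\Lambda_{i}\cap\Lambda_{j} = \{1\}\cup(B_{i}\cap B_{j})$ has exactly $1+(n-1)=n$ elements.
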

\begin{proof}
	By~(\ref{equation:: k-angle cross correlation : number of intersections}), if $1\leq i<j\leq d$ then
	\[
		\abs{\langle g_{i}, g_{j}\rangle} = \frac{d+1}{k(d+1-k)}\abs*{l - \frac{k^{2}}{d+1}}
	\]
	where $k = |\Lambda_{i}| = \frac{d+1}{2}$ and $l = |\Lambda_{i}\cap\Lambda_{j}| = \frac{d+1}{4}$.
	Since
	\[
		l - \frac{k^{2}}{d+1} = \frac{d+1}{4} - \frac{d+1}{4} = 0,
	\]
	it follows that $\{g_{i}\}_{i=1}^{d}$ is an orthogonal set. Further, since each $g_i$ has unit norm by construction, $\{g_{i}\}_{i=1}^{d}$ is an orthonormal basis.
	Therefore $\{f_{i}\}_{i=1}^{d+1}\cup\{g_{i}\}_{i=1}^{d}$ must be a UNTF.
	The proof of~\cref{theorem::k-angle-near-optimality} also shows that $\max_{i,j}|\langle g_{i},f_{j}\rangle| = \frac{1}{\sqrt{d}}$.
	Hence $\{f_{i}\}_{i=1}^{d+1}\cup\{g_{i}\}_{i=1}^{d}$ is a $(2d+1,d)$ UNTF with maximum coherence between distinct vectors given by $\frac{1}{\sqrt{d}}$.
\end{proof}
\noindent
One can now observe that Example~\ref{example:: optimal UNTF : R^3} can also be obtained from \cref{theorem::hadamard-design-construction} by setting $n = 1.$ The needed $(3, 1, 0)$ Hadamard design on the set $\{2, 3, 4\}$ can be taken as $B_1 = \{2 \},$ $B_2 = \{ 3 \},$ $B_3 = \{4\}.$
\begin{example}[An optimal UNTF to add to an $(8,7)$ ETF]
	Let $\{f_{i}\}_{i=1}^{8}\subset\RR^{7}$ be an ETF satisfying $\langle f_{i},f_{j}\rangle = -\frac{1}{7}$ for $1\leq i<j\leq 8$.
	Then \cref{theorem::hadamard-design-construction} shows that a $(7,3,1)$ Hadamard design on the set $\{2,3,\ldots,8\}$ can be used to find an optimal UNTF (specifically, an orthonormal basis) to add to this ETF.
	One such design can be found in~\cite{colbourn2010crc} and is given by
	\begin{align*}
		B_{1} &= \{2,5,6 \}, & 		B_{5} &= \{2,7,8 \}, \\
		B_{2} &= \{3,5,7 \}, &		B_{6} &= \{3,6,8 \}, \\
		B_{3} &= \{4,5,8 \}, &	    B_{7} &= \{4,6,7 \}. \\
		B_{4} &= \{2,3,4 \}, &	   & 
	\end{align*}
	The next step is to add $1$ to each block to create the sets $\{\Lambda_{i}\}_{i=1}^{7}$, and use these to construct the orthonormal basis $\{g_{i}\}_{i=1}^{7}$:
	\begin{align*}
		\vec{g}_{1} = \frac{\vec{f}_{1}+\vec{f}_{2}+\vec{f}_{5} + \vec{f}_{6}}{\norm*{\vec{f}_{1}+\vec{f}_{2}+\vec{f}_{5} + \vec{f}_{6}}},\ \dots,\ \vec{g}_{7} = \frac{\vec{f}_{1}+\vec{f}_{4}+\vec{f}_{6} + \vec{f}_{7}}{\norm*{\vec{f}_{1}+\vec{f}_{4}+\vec{f}_{6} + \vec{f}_{7}}}.
	\end{align*}
	The resulting $(15,7)$ UNTF $\{f_{i}\}_{i=1}^{8}\cup\{g_{i}\}_{i=1}^{7}$ then has maximum coherence between distinct vectors given by $\frac{1}{\sqrt{7}}$.
\end{example}
\noindent
\cref{theorem::hadamard-design-construction} uses Hadamard designs to construct low coherence UNTFs that meet the lower bound of \cref{theorem::minimal-worst-case-for-adding-to-etf}. However, as mentioned previously these designs are equivalent to the existence of a real Hadamard matrix of the right size.
It is therefore desirable to use more general block designs to construct low coherence UNTFs.
This will require the following definition.
\begin{definition}\label{example:: 2-designs : incidence matrix}
	Let $X = \set{x_{i}}_{i=1}^{v}$ denote a finite set and let $\mathcal{B} = \set{B_{i}}_{i=1}^{b}$ denote a block design on $X$.
	The matrix $K = \brackets{k_{ij}}$ for $1\leq i\leq v$, $1\leq j\leq b$, where
	\[
		k_{ij} =
		\begin{cases}
			1 &\text{if $x_{i}\in B_{j}$,} \\
			0 &\text{otherwise,}
		\end{cases}
	\]
	is called the incidence matrix of $\mathcal{B}$.
\end{definition}
\noindent The following fundamental result on block designs will be necessary, and can be found in~\cite{moore2013difference}.
\begin{lemma}[\cite{moore2013difference}]\label{lemma:: 2-designs : constraints : number of blocks : incidence matrix : lambda}
	Let $X = \set{x_{i}}_{i=1}^{v}$ denote a finite set and let $\mathcal{B} = \set{B_{i}}_{i=1}^{b}$ denote a $(v,k,\lambda)$ block design on $X$.
	Let $r$ denote the number of blocks in $\mathcal{B}$ containing a given element of $X$.
	\begin{enumerate}[(i)]
		\item
		Then
		\[
			b = \frac{\lambda\binom{v}{2}}{\binom{k}{2}}\quad\text{and}\quad r(k-1) = \lambda(v-1).
		\]
	
		\item
		Let $K = [k_{ij}]$ denote the incidence matrix of $\mathcal{B}$.
		Then
		\[
			KK^{T} = (r-\lambda)I+\lambda J
		\]
		where $J$ denotes the $v\times v$ matrix whose entries are all $1$.
	\end{enumerate}
\end{lemma}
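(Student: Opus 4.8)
The plan is to establish both parts by elementary double-counting, since each identity simply records the value obtained by counting an appropriate collection of incidences in two different ways, together with a direct entrywise computation for the matrix identity.

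For the first relation in part (i), I would count the ordered pairs $(y,B)$ where $y\in X$ is distinct from a fixed point $x$ and $B\in\mathcal{B}$ is a block containing both $x$ and $y$. On the one hand, each of the $v-1$ choices of $y$ yields exactly $\lambda$ blocks through $\set{x,y}$ by \cref{definition:: 2-designs}, giving $\lambda(v-1)$. On the other hand, the $r$ blocks through $x$ each contribute their remaining $k-1$ points, giving $r(k-1)$. Equating the two counts yields $r(k-1)=\lambda(v-1)$. For the formula for $b$, I would instead count pairs $(\set{x,y},B)$ in which $\set{x,y}$ is a $2$-subset of $X$ lying in the block $B$: summing over the $\binom{v}{2}$ pairs (each in $\lambda$ blocks) gives $\lambda\binom{v}{2}$, while summing over the $b$ blocks (each containing $\binom{k}{2}$ pairs) gives $b\binom{k}{2}$; solving gives $b=\lambda\binom{v}{2}/\binom{k}{2}$.

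For part (ii), I would compute the $(i,\ell)$ entry of $KK^{T}$ directly from \cref{example:: 2-designs : incidence matrix}. Since $(KK^{T})_{i\ell}=\sum_{j=1}^{b}k_{ij}k_{\ell j}$ and each $k_{ij}\in\set{0,1}$, this sum counts precisely the blocks $B_{j}$ containing both $x_{i}$ and $x_{\ell}$. On the diagonal ($i=\ell$) the summand is $k_{ij}^{2}=k_{ij}$, so the entry equals the number of blocks through $x_{i}$, namely $r$. Off the diagonal ($i\neq\ell$) the entry equals the number of blocks through the pair $\set{x_{i},x_{\ell}}$, namely $\lambda$. Hence $KK^{T}=rI+\lambda(J-I)=(r-\lambda)I+\lambda J$.

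None of these steps presents a genuine obstacle; the arguments are routine bookkeeping. The only point requiring care is keeping track of which quantity each double-count produces — in particular, recognizing in part (ii) that the diagonal entries record the replication number $r$ rather than $\lambda$, which is exactly where the point-regularity built into the definition of a block design is used.
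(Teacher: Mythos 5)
Your proof is correct and complete. The paper does not actually prove this lemma---it is quoted directly from~\cite{moore2013difference}---so there is no internal proof to compare against; your argument, double counting the point--block and pair--block incidences for part (i) and evaluating the entries of $KK^{T}$ for part (ii), is precisely the standard argument for these facts, and you correctly note that the constancy of the replication number $r$ (which here is built into Definition~\ref{definition:: 2-designs} rather than derived) is what makes the diagonal entries of $KK^{T}$ equal to $r$.
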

\noindent The following inequality will also be required and can be found in~\cite{bose1949}.
\begin{theorem}[Fisher's Inequality \cite{bose1949}]\label{THM:Fisher}
If $\mathcal{B}$ is a block design on $X = \{1, \ldots, d+1\}$ then $\mathcal{B}$ must contain at least $d+1$ blocks.
\end{theorem}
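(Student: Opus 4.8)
The plan is to pass to the incidence matrix $K$ of $\mathcal{B}$ and argue by a rank count. Here $K$ is a $(d+1)\times b$ matrix, where $b = |\mathcal{B}|$ is the number of blocks, so $\operatorname{rank} K \le \min(d+1, b)$. If I can show that $K$ has full row rank $d+1$, then necessarily $b \ge d+1$, which is exactly the claim. To establish full row rank it suffices to show that the $(d+1)\times(d+1)$ Gram-type matrix $KK^{T}$ is nonsingular, since $\operatorname{rank} K = \operatorname{rank} KK^{T}$.

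By \cref{lemma:: 2-designs : constraints : number of blocks : incidence matrix : lambda}(ii),
\[
	KK^{T} = (r-\lambda)I + \lambda J,
\]
where $J$ is the all-ones matrix of size $d+1$. Since $J$ has eigenvalue $d+1$ on the all-ones vector and eigenvalue $0$ with multiplicity $d$ on its orthogonal complement, the eigenvalues of $KK^{T}$ are $(r-\lambda) + \lambda(d+1)$ once and $r-\lambda$ with multiplicity $d$, whence
\[
	\det(KK^{T}) = \bigl(r - \lambda + \lambda(d+1)\bigr)(r-\lambda)^{d}.
\]
It then remains only to check that both factors are strictly positive. For the first factor I would use the counting identity $r(k-1) = \lambda(v-1) = \lambda d$ from \cref{lemma:: 2-designs : constraints : number of blocks : incidence matrix : lambda}(i): it gives $\lambda(d+1) = \lambda d + \lambda = r(k-1) + \lambda$, so that $r - \lambda + \lambda(d+1) = r + r(k-1) = rk > 0$. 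For the second factor I need $r > \lambda$. Because the blocks are proper subsets of $X$, we have $k < d+1$, i.e.\ $k-1 < d$; combined with $r(k-1) = \lambda d$ and $\lambda \ge 1$ this forces $r > \lambda$ (when $k\ge 2$, while the case $k=1$ forces $\lambda = 0$ and $KK^{T} = rI$, which is trivially nonsingular). Thus $\det(KK^{T}) > 0$, so $KK^{T}$ is nonsingular, $\operatorname{rank} K = d+1$, and therefore $b \ge d+1$.

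The only place any care is needed — and hence the main obstacle — is the strict positivity of the eigenvalue $r - \lambda$, equivalently the nonsingularity of $KK^{T}$. This is precisely where the hypothesis that blocks have size $k < v = d+1$ enters: if a block were allowed to equal all of $X$, the inequality $r > \lambda$ could fail and Fisher's bound would break down. I would therefore make the standing assumption $k < d+1$ explicit at the outset, after which the spectral computation for the rank-one perturbation $(r-\lambda)I + \lambda J$ closes the argument; everything else is routine.
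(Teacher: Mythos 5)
Your proof is correct, but note that the paper itself offers no proof of this statement: it is imported verbatim from the literature with the citation \cite{bose1949}, so there is nothing internal to compare against. What you have reconstructed is, in fact, precisely Bose's original argument from that cited reference --- the rank bound $\operatorname{rank} K = \operatorname{rank} KK^{T}$ together with the spectral decomposition of $KK^{T} = (r-\lambda)I + \lambda J$, whose eigenvalues $rk$ (simple) and $r-\lambda$ (multiplicity $d$) are both positive for a nondegenerate design. A pleasant feature of your write-up is that it runs entirely on the paper's own toolkit, since the identity $KK^{T} = (r-\lambda)I+\lambda J$ and the counting relations $r(k-1)=\lambda(v-1)$ are exactly the content of \cref{lemma:: 2-designs : constraints : number of blocks : incidence matrix : lambda}, so your argument could be inserted as a self-contained proof of \cref{THM:Fisher} without importing anything new. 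Your closing caveat is also substantive rather than pedantic: as stated in the paper, with no hypothesis excluding $k = d+1$, the claim is literally false (the single block $B_{1} = X$ is a $(v,v,1)$ design with $b = 1 < v$, and $r - \lambda = 0$ makes $KK^{T}$ singular exactly where you say the argument must be protected), so the standing assumption $k < d+1$ that you make explicit --- and which holds for every design the paper actually uses, where $k = \ceil*{\frac{d+1}{2}}$ --- is genuinely needed for the theorem to hold.
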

\begin{theorem}\label{theorem:: 2-designs : UNTFs}
	Let $\{\vec{f}_{i}\}_{i=1}^{d+1}\subset\mathbb{R}^{d}$ denote an ETF satisfying $\langle \vec{f}_{i},\vec{f}_{j}\rangle =-\frac{1}{d}$ for $1\leq i<j\leq d+1$, and suppose that $\mathcal{B} = \{B_{i}\}_{i=1}^{b}$ is a $(d+1,k,\lambda)$ block design on $\{1,\ldots,d+1\}$ for some $k,\lambda\in\NN$.
	Define $\vec{g}_{i}$ for $1\leq i\leq b$ by
	\[
		\vec{g}_{i} = \frac{\sum_{j\in B_{i}}\vec{f}_{j}}{\norm*{\sum_{j\in B_{i}}\vec{f}_{j}}}.
	\]
	Then $\set{\vec{g}_{i}}_{i=1}^{b}$ is a $(b, d)$ UNTF. 
\end{theorem}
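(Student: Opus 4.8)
The plan is to repackage the entire construction as a single matrix identity and then read off tightness directly from \cref{lemma:: 2-designs : constraints : number of blocks : incidence matrix : lambda}(ii). Unit norm is immediate since each $\vec g_i$ is a normalized vector, so only tightness requires work. The first observation is that, because every block of $\mathcal B$ has the same size $k$ and every off-diagonal inner product of the ETF equals $-\frac1d$, the normalizing constants are all equal: exactly as in the proof of \cref{theorem::k-angle-near-optimality},
\[
	\norm*{\sum_{j\in B_i}\vec f_j}^2 = k - \frac{k(k-1)}{d} = \frac{k(d+1-k)}{d}
\]
independently of $i$. Hence $\vec g_i = c\sum_{j\in B_i}\vec f_j$ with the single constant $c = \sqrt{\frac{d}{k(d+1-k)}}$, and if $F$ denotes the synthesis operator of $\{\vec f_i\}_{i=1}^{d+1}$ and $K$ the $(d+1)\times b$ incidence matrix of $\mathcal B$, then the $i$th column of $FK$ is $\sum_{j\in B_i}\vec f_j$. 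Thus the synthesis operator of $\{\vec g_i\}_{i=1}^{b}$ is $G = cFK$.

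Next I would compute $GG^* = c^2 F K K^T F^*$ and substitute $KK^T = (r-\lambda)I + \lambda J$ from \cref{lemma:: 2-designs : constraints : number of blocks : incidence matrix : lambda}(ii), where $J$ is the all-ones matrix, to obtain
\[
	GG^* = c^2(r-\lambda)\,FF^* + c^2\lambda\, FJF^*.
\]
Now two facts about the simplex ETF finish the argument. First, $FF^* = \frac{d+1}{d}I$ by tightness. Second, and this is the crux, $FJF^* = (F\mathbf 1)(F\mathbf 1)^T = 0$, because $F\mathbf 1 = \sum_{i=1}^{d+1}\vec f_i = 0$; this last vanishing is verified by $\norm*{\sum_i \vec f_i}^2 = (d+1) + (d+1)d\cdot(-\tfrac1d) = 0$. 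Therefore $GG^* = c^2(r-\lambda)\frac{d+1}{d}I$, a scalar multiple of the identity, so $\{\vec g_i\}_{i=1}^{b}$ is tight; being unit norm, it is a UNTF, and taking the trace forces the frame bound to be $\frac bd$.

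The main thing to be careful about is the positivity of the constant $c^2(r-\lambda)\frac{d+1}{d}$, i.e.\ that $r>\lambda$; this follows for any nondegenerate design with $k<d+1$ from the identity $r(k-1)=\lambda d$ of \cref{lemma:: 2-designs : constraints : number of blocks : incidence matrix : lambda}(i), which gives $r-\lambda = \lambda\frac{d+1-k}{k-1}>0$. The only genuinely geometric input beyond the combinatorics is the identity $\sum_{i=1}^{d+1}\vec f_i=0$ for the regular simplex, which is exactly what makes the rank-one term $FJF^*$ drop out; without it one would be left with a perturbation of the identity and the frame would fail to be tight. I would also note that \cref{THM:Fisher} guarantees $b\ge d+1$, so the collection has enough vectors to span $\RR^d$, although the tightness already forces $G$ to have full rank $d$.
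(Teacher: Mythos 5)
Your proof is correct, but it reaches tightness by a genuinely different route than the paper. The paper works on the Gram-matrix side: it writes $G_{1} = c^{2}K^{T}GK$ (with $G$ the Gram matrix of the ETF and $c^{2} = \frac{d}{k(d+1-k)}$), computes the frame potential $\tr G_{1}^{2}$ using $KK^{T} = (r-\lambda)I + \lambda J$ and $GJ = 0$, arrives at $\tr G_{1}^{2} = \frac{b^{2}}{d}$, and then invokes the frame-potential characterization of \cref{theorem::untfs-minimize-frame-potential}, which in turn requires \cref{THM:Fisher} to guarantee $b>d$ so that the characterization applies. You instead work on the frame-operator side: the same factorization $cFK$ of the synthesis operator and the same two inputs (\cref{lemma:: 2-designs : constraints : number of blocks : incidence matrix : lambda}(ii) and the vanishing of $\sum_{i}f_{i}$, which is exactly what makes $GJ=0$ in the paper) give $c^{2}(r-\lambda)FF^{*}$ directly as a positive multiple of the identity. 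Your argument buys three things: it is self-contained (no appeal to the Babenko--Fickus frame-potential theorem, and Fisher's inequality becomes an optional remark rather than a logical prerequisite), it exhibits the frame bound $\frac{b}{d}$ explicitly, and it makes visible where positivity comes from (your check that $r>\lambda$, which the paper never needs to address since the frame-potential criterion sidesteps it). The paper's route, on the other hand, stays entirely in the Gram-matrix picture that the subsequent coherence estimates are phrased in. One small caveat in your positivity step: the identity $r-\lambda = \lambda\frac{d+1-k}{k-1}$ presupposes $k\geq 2$ (equivalently $\lambda\geq 1$), which is the intended nondegenerate setting of the theorem; with $\lambda\in\NN$ and the design definition this is automatic, so it is not a gap, but it is worth stating.
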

\begin{proof}
	Let $F$ denote the synthesis operator of $\set{\vec{f}_{i}}_{i=1}^{d+1}$, let $G$ denote the corresponding Gram matrix, and let $K$ be the incidence matrix of the block design $\mathcal{B}$.
	Let $F_{1}$ denote the synthesis operator of $\set{\vec{g}_{i}}_{i=1}^{b}$ and $G_{1}$ the corresponding Gram matrix.
	Then using the proof of \cref{theorem::k-angle-near-optimality}, one can write
	\[
		F_{1} = \sqrt{\frac{d}{k(d+1-k)}}FK\quad\text{and}\quad G_{1} =  F_{1}^{T}F_{1} = \frac{d}{k(d+1-k)}K^{T}GK.
	\]
	To show that $\set{\vec{g}_{i}}_{i=1}^{b}$ is a UNTF, the frame potential is used in accordance to \cref{theorem::untfs-minimize-frame-potential}.
	This is valid since $b > d$ by~\cref{THM:Fisher}.
	Note that
	\begin{align*}
		FP(\set{\vec{g}_{i}}_{i=1}^{b})
		= \tr G_{1}^{2} = \left(\frac{d}{k(d+1-k)}\right)^{2}\tr(K^{T}GKK^{T}GK).
	\end{align*}
	By Lemma~\ref{lemma:: 2-designs : constraints : number of blocks : incidence matrix : lambda} (ii),
	\[
		KK^{T} = (r-\lambda)I+\lambda J.
	\]
	Since $GJ$ is the zero matrix,
	\begin{align*}
		\tr G_{1}^{2}
		&= \left(\frac{d}{k(d+1-k)}\right)^{2}\tr((r-\lambda)G^{2}KK^{T}) \\
		&= \left(\frac{d}{k(d+1-k)}\right)^{2}(r-\lambda)^{2}\frac{(d+1)^{2}}{d}
	\end{align*}
	where the last equality follows from the fact that the frame potential of the original ETF is $\frac{(d+1)^{2}}{d}$.
	By Lemma~\ref{lemma:: 2-designs : constraints : number of blocks : incidence matrix : lambda} (i),
	\[
		\lambda = \frac{bk(k-1)}{(d+1)d}\quad\text{and}\quad r = \frac{d\lambda}{k-1}.
	\]
	Thus
	\[
		r-\lambda = \frac{bk(d+1-k)}{(d+1)d},
	\]
	and so
	\begin{align*}
		\tr G_{1}^{2}
		= \left(\frac{d}{k(d+1-k)}\right)^{2}(r-\lambda)^{2}\frac{(d+1)^{2}}{d} = \frac{b^{2}}{d}.
	\end{align*}
	Therefore $\set{\vec{g}_{i}}_{i=1}^{b}$ is a $(b, d)$ UNTF by \cref{theorem::untfs-minimize-frame-potential}.
\end{proof}
\noindent
As (\ref{equation:: k-angle cross correlation : number of intersections}) shows, the coherence between two distinct vectors $\vec{g}_{i}$ and $\vec{g}_{j}$ obtained from this construction is related to the size of the intersection of the blocks $B_{i}$ and $B_{j}$ that determine $\vec{g}_{i}$ and $\vec{g}_{j}$.
\begin{definition}\label{definition:: 2-designs : intersection numbers}
	Let $\mathcal{B} = \{B_{i}\}_{i=1}^{b}$ denote a block design on a set $X$.
	An integer $n\geq0$ is said to be an \textit{intersection number} of $\mathcal{B}$ if there are blocks $B_{i}$ and $B_{j}$ such that $n = |B_{i}\cap B_{j}|$.
\end{definition}
\noindent Bounds on the possible intersection numbers of a block design are required in order to obtain UNTFs that are optimal in the sense of Definition~\ref{definition::optimal-frames-to-add-to-etf}.
The bound below was originally given in~\cite{majumdar1953some} but the form used here is from~\cite{Beutelspacher82}.
\begin{theorem}[\cite{Beutelspacher82}]\label{theorem:: intersection numbers : lower and upper bounds}
	Let $\mathcal{B} = \{B_{i}\}_{i=1}^{b}$ denote a $(v,k,\lambda)$ block design on a set $X$, and let $r$ denote the number of blocks containing a given element of $X$.
	Define $\sigma,\tau$ and $\Sigma$ as follows:
	\begin{align*}
		\sigma &= k-r+\lambda \\
		\tau &= \frac{k}{v}(2k-v) \\
		\Sigma &= \frac{2k\lambda}{r} - (k-r+\lambda).
	\end{align*}
	Let $B_{i}$ and $B_{j}$ denote distinct blocks of $\mathcal{B}$.
	Then
	\[
		\max\{\sigma,\tau\} \leq |B_{i}\cap B_{j}| \leq \Sigma.
	\]
\end{theorem}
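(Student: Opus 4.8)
The plan is to encode all intersection numbers as the off-diagonal entries of a single positive semidefinite matrix and then extract each of the three bounds from its spectrum. Let $K$ be the $v\times b$ incidence matrix of $\mathcal{B}$ (Definition~\ref{example:: 2-designs : incidence matrix}) and set $M = K^{T}K$. The columns of $K$ are the indicator vectors of the blocks, so $M_{ij} = |B_i\cap B_j|$; in particular the diagonal entries of $M$ all equal $k$ and the off-diagonal entries are exactly the quantities to be bounded. Fixing the two distinct blocks $B_i,B_j$ and writing $x = M_{ij} = |B_i\cap B_j|$, I would obtain the whole statement by evaluating the quadratic form $w\mapsto\langle Mw,w\rangle$ on two carefully chosen test vectors built from the standard basis vectors $e_i,e_j\in\RR^{b}$.

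First I would pin down the spectrum of $M$. By Lemma~\ref{lemma:: 2-designs : constraints : number of blocks : incidence matrix : lambda}(ii), $KK^{T} = (r-\lambda)I+\lambda J$, where $J=\mathbf{1}\mathbf{1}^{T}$. Since $J$ has eigenvalue $v$ on the all-ones vector $\mathbf{1}$ and $0$ on $\mathbf{1}^{\perp}$, and since $\lambda(v-1)=r(k-1)$ by Lemma~\ref{lemma:: 2-designs : constraints : number of blocks : incidence matrix : lambda}(i), the matrix $KK^{T}$ has eigenvalue $(r-\lambda)+\lambda v = rk$ once and $r-\lambda$ with multiplicity $v-1$. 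As $M=K^{T}K$ shares the nonzero spectrum of $KK^{T}$, its eigenvalues are $rk$ (simple), $r-\lambda$ (multiplicity $v-1$), and $0$ (multiplicity $b-v$); a direct check gives $M\mathbf{1}=rk\,\mathbf{1}$, so the all-ones vector spans the top eigenline. Because $rk>r-\lambda$, this yields the two facts I need: $M\succeq0$, and every $w\perp\mathbf{1}$ satisfies $0\le\langle Mw,w\rangle\le(r-\lambda)\norm{w}^{2}$.

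With these in hand the three bounds come from two test vectors. Taking $w=e_i-e_j\in\mathbf{1}^{\perp}$ gives $\langle Mw,w\rangle = 2k-2x$ and $\norm{w}^{2}=2$, so the ceiling $\langle Mw,w\rangle\le 2(r-\lambda)$ rearranges to $x\ge k-r+\lambda=\sigma$. Next I would take $w=e_i+e_j-\tfrac{2}{b}\mathbf{1}$, the component of $e_i+e_j$ orthogonal to $\mathbf{1}$. Decomposing $e_i+e_j=\tfrac{2}{b}\mathbf{1}+w$ and using $M\mathbf{1}=rk\,\mathbf{1}$ gives the identity $\langle M(e_i+e_j),e_i+e_j\rangle = \tfrac{4rk}{b}+\langle Mw,w\rangle$, whose left-hand side equals $2k+2x$. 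The floor $\langle Mw,w\rangle\ge0$ then produces $x\ge\tfrac{2rk}{b}-k=\tau$, while the ceiling $\langle Mw,w\rangle\le(r-\lambda)\norm{w}^{2}$ with $\norm{w}^{2}=2-\tfrac{4}{b}$ produces the upper bound $x\le\Sigma$. Collecting the two lower bounds as $\max\{\sigma,\tau\}$ finishes the proof.

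The steps above are conceptually clean, so the main work is the closed-form bookkeeping: one must turn $\tfrac{2rk}{b}-k$ into $\tfrac{k}{v}(2k-v)$ and $(r-\lambda-k)+\tfrac{2(rk-r+\lambda)}{b}$ into $\tfrac{2k\lambda}{r}-(k-r+\lambda)$, both of which follow from $bk=vr$ together with $rk-r+\lambda=\lambda v$ (equivalently Lemma~\ref{lemma:: 2-designs : constraints : number of blocks : incidence matrix : lambda}(i)). The one genuine subtlety, and the place I would be most careful, is justifying the Rayleigh ceiling $r-\lambda$ on $\mathbf{1}^{\perp}$: this requires that the top eigenvalue $rk$ be attained \emph{only} along $\mathbf{1}$, which is exactly the simplicity of the eigenvalue $rk$ of $KK^{T}$. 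Apart from that single spectral gap and the positive semidefiniteness of $M$, no further structural input is needed.
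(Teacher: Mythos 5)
Your proof is correct, but there is no proof in the paper to compare it with: the paper states \cref{theorem:: intersection numbers : lower and upper bounds} as an imported result, attributing it to Majumdar and Beutelspacher, and uses it as a black box inside the proof of \cref{theorem:: 2-designs : maximum cross correlation}. Your argument therefore serves as a self-contained substitute for that citation rather than an alternative to anything in the text. Checking it line by line: the Gram-type matrix $M=K^{T}K$ indeed has $M_{ii}=k$ and $M_{ij}=\abs{B_{i}\cap B_{j}}$; by \cref{lemma:: 2-designs : constraints : number of blocks : incidence matrix : lambda}(ii), $KK^{T}=(r-\lambda)I+\lambda J$ has the simple eigenvalue $r-\lambda+\lambda v=rk$ (using $r(k-1)=\lambda(v-1)$) and $r-\lambda$ with multiplicity $v-1$; these transfer to $M$ along with $0$ of multiplicity $b-v$, and $M\mathbf{1}=rk\,\mathbf{1}$ follows from $K\mathbf{1}_{b}=r\mathbf{1}_{v}$ and $K^{T}\mathbf{1}_{v}=k\mathbf{1}_{b}$, so on $\mathbf{1}^{\perp}$ the quadratic form is squeezed between $0$ and $(r-\lambda)\norm{w}^{2}$. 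Your test vectors then do exactly what you claim: $e_{i}-e_{j}$ gives $2k-2x\leq 2(r-\lambda)$, i.e.\ $x\geq\sigma$; and $w=e_{i}+e_{j}-\tfrac{2}{b}\mathbf{1}$ gives $2k+2x=\tfrac{4rk}{b}+\langle Mw,w\rangle$, whose floor yields $x\geq\tfrac{2rk}{b}-k=\tfrac{k}{v}(2k-v)=\tau$ via $bk=vr$, and whose ceiling with $\norm{w}^{2}=2-\tfrac{4}{b}$ yields $x\leq(r-\lambda-k)+\tfrac{2\lambda v}{b}=\tfrac{2k\lambda}{r}-(k-r+\lambda)=\Sigma$ via $rk-r+\lambda=\lambda v$ and $\tfrac{v}{b}=\tfrac{k}{r}$. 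All of this bookkeeping is right.

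One caveat you should state explicitly: the simplicity of the eigenvalue $rk$, on which your Rayleigh ceiling over $\mathbf{1}^{\perp}$ rests, requires $rk>r-\lambda$, i.e.\ $\lambda v>0$, so you are implicitly assuming $\lambda\geq1$. This is harmless — when $\lambda=0$ the relation $r(k-1)=\lambda(v-1)$ forces $k\leq1$, all blocks are pairwise disjoint, and the claimed inequalities hold trivially (with $\sigma,\tau\leq 0\leq\Sigma$) — but since the paper elsewhere does use a $(3,1,0)$ design, closing that degenerate case with one sentence would make your proof airtight.
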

\begin{remark}\label{remark:: intersection numbers : 0}
	Beutelspacher~\cite{Beutelspacher82} gives a slight refinement of the bound given in \cref{theorem:: intersection numbers : lower and upper bounds} by showing that if $\tau$ as defined in \cref{theorem:: intersection numbers : lower and upper bounds} is an intersection number of a block design, then $\tau$ must equal $0$.
\end{remark}
\noindent Finally, the previous results may now be used to obtain a UNTF $\set{g_{i}}_{i=1}^b$ that is optimal or nearly optimal with respect to the $(d+1, d)$ ETF $\set{f_{i}}_{i=1}^{d+1}$. Let $N = b + d+1.$
In cases where an $(N,d)$ ETF does not exist, the low coherence UNTF $\set{g_{i}}_{i=1}^b \cup\set{f_{i}}_{i=1}^{d+1} \subset \mathbb{R}^d$ can be considered an approximation.
This is presented in \cref{theorem:: 2-designs : maximum cross correlation}.
\begin{theorem}\label{theorem:: 2-designs : maximum cross correlation}
	Let $\veccoll{\vec{f}}{i}{i=1}{d+1}\subset\RR^{d}$ denote an ETF satisfying $\dotprod{\vec{f}_{i}}{\vec{f}_{j}} = -\frac{1}{d}$ for $1\leq i<j\leq d+1$.
	Let $\mathcal{B} = \{B_{i}\}_{i=1}^{b}$ denote a $(d+1,k,\lambda)$ block design on $\{1,\ldots,d+1\}$.
	Let $\set{\vec{g}_{i}}_{i=1}^{b}$ be given by
	\[
		\vec{g}_{i} = \frac{\sum_{j\in B_{i}}\vec{f}_{j}}{\norm*{\sum_{j\in B_{i}}\vec{f}_{j}}}.
	\]
	Then
	\begin{equation}\label[ineq]{inequality :: block design : cross correlation : lambda}
		\abs*{\dotprod{\vec{g}_{i}}{\vec{g}_{j}}} \leq \frac{d+1}{k(k-1)}\lambda-1.
	\end{equation}
	Furthermore, suppose $k = \ceil*{\frac{d+1}{2}}$ and $\mathcal{B}$ is a $(d+1,k,\lambda)$ block design with $\lambda = \ceil*{\frac{k(k-1)}{d+1}}$.
	Then $\set{f_{i}}_{i=1}^{d+1}\cup\set{g_{i}}_{i=1}^{b}$ is a $(d + 1 + b, d)$ UNTF satisfying the following.
	\begin{enumerate}[(i)]
		\item
		If $d$ is odd then the maximum coherence between distinct vectors of $\set{\vec{f}_{i}}_{i=1}^{d+1}\cup\set{\vec{g}_{i}}_{i=1}^{b}$ is bounded above by
		\[
			\max\set*{\frac{1}{\sqrt{d}},\frac{3}{d-1}},
		\]
and
		\item
		if $d$ is even then the maximum coherence between distinct vectors of $\set{\vec{f}_{i}}_{i=1}^{d+1}\cup\set{\vec{g}_{i}}_{i=1}^{b}$ is bounded above by
		\[
			 \max\set*{\sqrt{\frac{d+2}{d^{2}}},\frac{4}{d+2}+\frac{3}{d(d+2)}}.
		\]
	\end{enumerate}
\end{theorem}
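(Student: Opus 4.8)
The plan is to split the coherence of $\set{f_i}_{i=1}^{d+1}\cup\set{g_i}_{i=1}^{b}$ into its three natural pieces and bound each separately. Since $\set{f_i}_{i=1}^{d+1}$ is the given ETF, $|\langle f_i,f_j\rangle| = \frac1d$ for $i\neq j$; since every $g_i$ is the normalized sum of the $k=\ceil{\frac{d+1}{2}}$ vectors indexed by a block $B_i$, the cross terms $|\langle g_i,f_j\rangle|$ are exactly the quantities estimated in \cref{theorem::k-angle-near-optimality}, namely $\frac1{\sqrt d}$ when $d$ is odd and at most $\sqrt{\frac{d+2}{d^2}}$ when $d$ is even. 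The only genuinely new ingredient is $\max_{i\neq j}|\langle g_i,g_j\rangle|$, which is the content of \cref{inequality :: block design : cross correlation : lambda}. Because $\frac1d\le\frac1{\sqrt d}$, the $f$–$f$ terms are always dominated by the $g$–$f$ terms, so the maximum coherence of the union is the larger of the $g$–$f$ and $g$–$g$ bounds, and the theorem reduces to (a) establishing \cref{inequality :: block design : cross correlation : lambda} and (b) evaluating it at the prescribed $k$ and $\lambda$.

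For (a) I would work entirely at the level of the Gram matrix and avoid the per-pair intersection-number estimates of \cref{theorem:: intersection numbers : lower and upper bounds}. By \cref{THM:Fisher} the design has $b>d$ blocks, so \cref{theorem:: 2-designs : UNTFs} applies and $\set{g_i}_{i=1}^{b}$ is a $(b,d)$ UNTF; hence its Gram matrix $G_1$ has only the eigenvalues $0$ and $\frac bd$, and $M=\frac bd I-G_1$ is positive semidefinite with every diagonal entry equal to $\frac bd-1$. For a positive semidefinite matrix each $2\times2$ principal minor is nonnegative, so $M_{ij}^2\le M_{ii}M_{jj}=\left(\frac bd-1\right)^2$ for all $i\neq j$; since $M_{ij}=-\langle g_i,g_j\rangle$ off the diagonal, this yields $|\langle g_i,g_j\rangle|\le\frac bd-1$. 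Finally the counting identity $bk(k-1)=\lambda d(d+1)$ from \cref{lemma:: 2-designs : constraints : number of blocks : incidence matrix : lambda}(i) rewrites $\frac bd-1$ as $\frac{d+1}{k(k-1)}\lambda-1$, which is \cref{inequality :: block design : cross correlation : lambda}. This positive-semidefiniteness step is the conceptual core, and it is short precisely because it never touches $|B_i\cap B_j|$.

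For (b) I would substitute $k=\ceil{\frac{d+1}{2}}$ and $\lambda=\ceil{\frac{k(k-1)}{d+1}}$ into $\frac{d+1}{k(k-1)}\lambda-1$ and simplify by parity of $d$. When $d$ is odd, $k=\frac{d+1}{2}$ gives $\frac{k(k-1)}{d+1}=\frac{d-1}{4}$ and $\frac{d+1}{k(k-1)}=\frac4{d-1}$, so the $g$–$g$ bound is $\frac{4\lambda-(d-1)}{d-1}$. When $d$ is even, $k=\frac{d+2}{2}$ gives $\frac{k(k-1)}{d+1}=\frac{d+1}{4}-\frac1{4(d+1)}$ and $\frac{d+1}{k(k-1)}=\frac{4(d+1)}{d(d+2)}$. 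Feeding the resulting $g$–$g$ bound and the $g$–$f$ bound of \cref{theorem::k-angle-near-optimality} into $\max\set{\cdot,\cdot}$ should reproduce the two stated expressions $\max\set{\frac1{\sqrt d},\frac3{d-1}}$ and $\max\set{\sqrt{\frac{d+2}{d^2}},\frac4{d+2}+\frac3{d(d+2)}}$. The $(d+1+b,d)$ UNTF claim itself is immediate, since $\set{g_i}_{i=1}^{b}$ is a UNTF by \cref{theorem:: 2-designs : UNTFs} and a union of UNTFs is again a UNTF.

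The fussy part will be the ceiling bookkeeping in (b), since the crude estimate $\ceil x<x+1$ only produces $\frac4{d-1}$ and $\frac{4d+4}{d(d+2)}$, each slightly weaker than the claimed $\frac3{d-1}$ and $\frac{4d+3}{d(d+2)}=\frac4{d+2}+\frac3{d(d+2)}$. To recover the stated constants I would track the fractional part of the argument of the ceiling: for odd $d$ the number $\frac{d-1}{4}$ has $d-1$ even, so its fractional part is $0$ or $\frac12$, whence $\lambda\le\frac{d-1}{4}+\frac12$ and $\frac{4\lambda-(d-1)}{d-1}\le\frac2{d-1}\le\frac3{d-1}$; for even $d$ the subtracted term $\frac1{4(d+1)}$ is smaller than $\frac14$ while $\frac{d+1}{4}$ is not an integer, so $\lambda=\ceil{\frac{d+1}{4}}\le\frac{d+1}{4}+\frac34$, and substituting this collapses the $g$–$g$ bound to $\frac{3d+4}{d(d+2)}$, which is at most $\frac{4d+3}{d(d+2)}$. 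Everything else—the domination of $\frac1d$, the $g$–$f$ estimate, and the fact that the bound $\frac bd-1$ is sharp when $\mathcal B$ is symmetric (all $|B_i\cap B_j|$ equal $\lambda$ by \cref{lemma:symmetric-design-intersection-number})—follows from results already in hand.
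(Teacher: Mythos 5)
Your proposal is correct, and its core step takes a genuinely different route from the paper's. For \cref{inequality :: block design : cross correlation : lambda} the paper works at the level of intersection numbers: it invokes Beutelspacher's bounds (\cref{theorem:: intersection numbers : lower and upper bounds} together with \cref{remark:: intersection numbers : 0}), shows via the midpoint computation $\bar{\sigma}=\frac{k(k-1)}{d}\leq\frac{k^{2}}{d+1}$ that the minimal intersection $\sigma=k-r+\lambda$ is the extremal value of $l$ in (\ref{equation:: k-angle cross correlation : number of intersections}), and then simplifies $\frac{d+1}{k(d+1-k)}\bigl(\frac{k^{2}}{d+1}-\sigma\bigr)$ down to $\frac{d+1}{k(k-1)}\lambda-1$. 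You bypass all of that: once \cref{theorem:: 2-designs : UNTFs} (with \cref{THM:Fisher} guaranteeing $b>d$) makes $\set{g_{i}}_{i=1}^{b}$ a $(b,d)$ UNTF, its Gram matrix $G_{1}$ has spectrum $\set{0,\frac{b}{d}}$, so $\frac{b}{d}I-G_{1}\succeq0$, and nonnegativity of the $2\times2$ principal minors gives $\abs*{\dotprod{g_{i}}{g_{j}}}\leq\frac{b}{d}-1$, which the counting identity $bk(k-1)=\lambda d(d+1)$ of \cref{lemma:: 2-designs : constraints : number of blocks : incidence matrix : lambda} converts into exactly the stated bound. The two routes provably land on the same quantity (the paper's $\sigma$-computation also passes through $\frac{b}{d}-1$ before the final substitution), but yours is shorter, self-contained (no external combinatorial theorem), and reveals the bound as a completely general coherence bound valid for \emph{any} $(b,d)$ UNTF; what it gives up is the per-pair information the paper retains, since (\ref{equation:: k-angle cross correlation : number of intersections}) plus the intersection-number analysis identifies exactly which pairs attain the maximum (those with $\abs{B_{i}\cap B_{j}}=\sigma$), whereas the spectral argument cannot distinguish pairs — your own sharpness remark for symmetric designs has to fall back on \cref{lemma:symmetric-design-intersection-number} for precisely this reason. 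Your ceiling bookkeeping in (b) is the same in spirit as the paper's, which writes $\lambda=\frac{d-1}{4}+\varepsilon$ with $0\leq\varepsilon\leq\frac{3}{4}$ for odd $d$ and $\lambda=\frac{d(d+2)}{4(d+1)}+\varepsilon$ with $0\leq\varepsilon\leq\frac{4d+3}{4(d+1)}$ for even $d$; your parity refinement (giving $\frac{2}{d-1}$ and $\frac{3d+4}{d(d+2)}$) is sound and slightly sharper than needed, and still implies the stated bounds. The remaining pieces — the union being a UNTF and the domination of the $f$--$f$ coherence $\frac{1}{d}$ by the $g$--$f$ bounds of \cref{theorem::k-angle-near-optimality} — are handled exactly as in the paper.
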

\begin{proof}
	Let $\sigma$ and $\Sigma$ be as given in \cref{theorem:: intersection numbers : lower and upper bounds}.
	By Remark~\ref{remark:: intersection numbers : 0}, the bound in \cref{theorem:: intersection numbers : lower and upper bounds} becomes
	\[
		\sigma\leq |B_{i}\cap B_{j}| \leq \Sigma.
	\]
	First, recall (\ref{equation:: k-angle cross correlation : number of intersections}):
	\[
		\abs*{\dotprod{\vec{g}_{i}}{\vec{g}_{j}}} = \frac{d+1}{k(d+1-k)}\abs*{ l-\frac{k^{2}}{d+1}}.
	\]	
	Hence the intersection number of $\mathcal{B}$ that is \textit{farthest} from $\frac{k^{2}}{d+1}$ will give $\max_{i\neq j}\abs*{\dotprod{\vec{g}_{i}}{\vec{g}_{j}}}$.
	Let $\bar{\sigma} = \frac{\sigma+\Sigma}{2}$.
	It will be shown that $\bar{\sigma} \leq \frac{k^{2}}{d+1}$, which will imply that $\Sigma$ is closer to $\frac{k^{2}}{d+1}$ than $\sigma$.
	The definitions of $\sigma$ and $\Sigma$, as well as the relation $\frac{\lambda}{r} = \frac{k-1}{d}$ which is obtained from Lemma~\ref{lemma:: 2-designs : constraints : number of blocks : incidence matrix : lambda}, give
	\[
	 	\bar{\sigma} = \frac{\sigma+\Sigma}{2} = \frac{k\lambda}{r} = \frac{k(k-1)}{d}.
	\]
	Therefore,
	\begin{align*}
		\bar{\sigma}-\frac{k^{2}}{d+1}= k\left[\frac{k-1}{d}-\frac{k}{d+1}\right] = -k\left[\frac{d+1-k}{d(d+1)}\right] \leq0.
	\end{align*}
	Thus $l=\sigma$ gives the largest possible value in (\ref{equation:: k-angle cross correlation : number of intersections}) and so
	\[
		\abs*{\dotprod{\vec{g}_{i}}{\vec{g}_{j}}} \leq \frac{d+1}{k(d+1-k)}\left|\sigma - \frac{k^{2}}{d+1}\right|.
	\]
	Using the expression for $\sigma$ from \cref{theorem:: intersection numbers : lower and upper bounds} as well as the fact that $\sigma\leq\frac{k^{2}}{d+1}$,
	\begin{align*}
		\frac{d+1}{k(d+1-k)}\left|\sigma - \frac{k^{2}}{d+1}\right|
		&= \frac{d+1}{k(d+1-k)}\left[\frac{k^{2}}{d+1}-\sigma\right] \\
		&= \frac{k}{d+1-k}-\frac{d+1}{k(d+1-k)}[k-(r-\lambda)].
	\end{align*}
	As shown in the proof of \cref{theorem:: 2-designs : UNTFs},
	\[
		r-\lambda = \frac{bk(d+1-k)}{d(d+1)}
	\]
	which gives
	\[
		\frac{k}{d+1-k}-\frac{d+1}{k(d+1-k)}[k-(r-\lambda)] = \frac{d+1}{k(k-1)}\lambda-1
	\]
	after substituting the expression for $b$ from Lemma~\ref{lemma:: 2-designs : constraints : number of blocks : incidence matrix : lambda}.
	Therefore,
	\[
		 \abs*{\dotprod{\vec{g}_{i}}{\vec{g}_{j}}}\leq\frac{d+1}{k(k-1)}\lambda-1.
	\]
	
	By \cref{theorem:: 2-designs : UNTFs}, $\set{g_{i}}_{i=1}^{b}$ is a UNTF.
	Since $\set{f_{i}}_{i=1}^{d+1}$ is also a UNTF, $\set{f_{i}}_{i=1}^{d+1}\cup\set{g_{i}}_{i=1}^{b}$ is a UNTF as well.
	The rest of the proof involves finding the bounds of parts (i) and (ii).
	Now let $k = \ceil*{\frac{d+1}{2}}$ and suppose that $\lambda=\ceil{\frac{k(k-1)}{d+1}}$.
	If $d$ is odd, then $k = \frac{d+1}{2}$ and $\lambda = \ceil*{\frac{d-1}{4}}$.
	Thus $\lambda = \frac{d-1}{4}+\varepsilon$ where $0\leq\varepsilon\leq\frac{3}{4}$, and it follows that
	\begin{align*}
		\abs*{\dotprod{\vec{g}_{i}}{\vec{g}_{j}}}
		\leq \frac{d+1}{k(k-1)}\lambda-1= \frac{4}{d-1}\left[\frac{d-1}{4}+\varepsilon\right]-1 \leq \frac{3}{d-1}.
	\end{align*}
	Similarly, if $d$ is even then $k = \frac{d+2}{2}$ and $\lambda = \ceil*{\frac{d(d+2)}{4(d+1)}}$.
	Thus $\lambda = \frac{d(d+2)}{4(d+1)}+\varepsilon$ where $0\leq \varepsilon\leq \frac{4(d+1)-1}{4(d+1)}$ and so
	\begin{align*}
		 \abs*{\dotprod{\vec{g}_{i}}{\vec{g}_{j}}}\leq\frac{d+1}{k(k-1)}\lambda-1= \frac{4(d+1)\varepsilon}{d(d+2)} \leq \frac{4}{d+2}+\frac{3}{d(d+2)}.
	\end{align*}
	Combining these bounds with the bounds on the coherence between $\set{g_{i}}$ and $\set{f_{i}}$ given in \cref{theorem::k-angle-near-optimality} finishes the proof.
\end{proof}
\begin{remark}
The bounds in (i) and (ii) of \cref{theorem:: 2-designs : maximum cross correlation} are upper bounds on the maximum coherence between distinct vectors in the low coherence UNTF $\set{\vec{f}_{i}}_{i=1}^{d+1}\cup\set{\vec{g}_{i}}_{i=1}^{b}$. 
When $d$ is odd and $d\geq 11$, the bound (i) simplifies to $\frac{1}{\sqrt{d}}$.
Similarly, if $d$ is even and $d\geq12$, the bound (ii) simplifies to $\sqrt{\frac{d+2}{d^{2}}}$.
In particular, if $d$ is odd, $d\geq 11$ and there exists a $(d+1,k,\lambda)$ block design $\mathcal{B}$ with $k = \frac{d+1}{2}$ and $\lambda = \ceil{\frac{d-1}{4}}$, then the maximum coherence of the UNTF constructed using~\cref{theorem:: 2-designs : maximum cross correlation} meets the lower bound of~\cref{theorem::minimal-worst-case-for-adding-to-etf}.
\end{remark}
\noindent
According to the conditions given in Proposition 3.2 in~\cite{Holmes04}, a real $(22,10)$ ETF does not exist.
The following example uses \cref{theorem:: 2-designs : maximum cross correlation} to give a low coherence UNTF in this case.
\begin{example}[Approximating a $(22,10)$ ETF]\label{example:: symmetric design : optimal untf}
	Let $\{\vec{f}_{i}\}_{i=1}^{11}\subset\RR^{10}$ denote an ETF where $\dotprod{\vec{f}_{i}}{\vec{f}_{j}} = -\frac{1}{10}$ for $1\leq i<j\leq 11$.
	Let $k = \ceil*{\frac{11}{2}} = 6$ and $\lambda = \ceil*{\frac{6\cdot5}{11}} = 3$.
	Then the goal is to find an $(11,6,3)$ block design.
	One such design can be found in~\cite{cochran1957experimental} and is given by $\mathcal{B} = \set{B_{i}}_{i=1}^{11}$ with
	\begin{align*}
		B_{1} &= \{4,6,7,9,10,11\}, & 	B_{5} &= \{2,3,4,8,10,11\},	& B_{9} &= \{1,3,4,6,7,8\}, \\
		B_{2} &= \{1,5,7,8,10,11\}, &	B_{6} &= \{1,3,4,5,9,11\},	& B_{10} &= \{2,4,5,7,8,9\}, \\
		B_{3} &= \{1,2,6,8,9,11\}, &	    B_{7} &= \{1,2,4,5,6,10\},	& B_{11} &= \{3,5,6,8,9,10\}. \\
		B_{4} &= \{1,2,3,7,9,10\}, &	    B_{8} &= \{2,3,5,6,7,11\},	
	\end{align*}
\noindent Now define $\set{\vec{g}_{i}}_{i=1}^{11}$ by
	\begin{align*}
		\vec{g}_{1} = \frac{\vec{f}_{4}+\vec{f}_{6}+\dots+\vec{f}_{11}}{\norm*{\vec{f}_{4}+\vec{f}_{6}+\dots+\vec{f}_{11}}},\ \dots,\ \vec{g}_{11} = \frac{\vec{f}_{3}+\vec{f}_{5}+\dots+\vec{f}_{10}}{\norm*{\vec{f}_{3}+\vec{f}_{5}+\dots+\vec{f}_{10}}}.
	\end{align*}
	Then by \cref{theorem:: 2-designs : UNTFs,theorem:: 2-designs : maximum cross correlation}, $\{\vec{f}_{i}\}_{i=1}^{11}\cup\{\vec{g}_{i}\}_{i=1}^{11}$ is a UNTF with the maximum coherence between distinct vectors being bounded above by
	\[
		\max\set*{\sqrt{\frac{d+2}{d^{2}}},\frac{4}{d+2}+\frac{3}{d(d+2)}} 
		= \frac{43}{120}\approx.3583.
	\]
By \cref{theorem::minimal-worst-case-for-adding-to-etf}, the lower bound on the maximum coherence between distinct vectors for this example is $\frac{1}{\sqrt{10}}\approx.3162$. Thus, the the maximum coherence between distinct vectors lies in $[.3162, .3583].$ The Welch bound for a hypothetical $(22,10)$ ETF is $\sqrt{\frac{2}{35}}\approx.2390$. The low coherence UNTF $\{\vec{f}_{i}\}_{i=1}^{11}\cup\{\vec{g}_{i}\}_{i=1}^{11}$ obtained here can be thought of as an  approximation of a $(22,10)$ ETF.
\end{example}
\begin{remark}
	Armed with a table of block designs (such as the those provided in~\cite{cochran1957experimental,colbourn2010crc}), it is possible to use~\cref{theorem:: 2-designs : maximum cross correlation} to investigate a wide variety of candidates for low coherence UNTFs. 
	If $d$ is odd and $d\geq11$, and if one has a $(d+1,k,\lambda)$ block design with $k = \frac{d+1}{2}$ and $\lambda = \ceil*{\frac{k(k-1)}{d+1}}$, i.e., a block design that satisfies the constraints required to use the bound (i) in \cref{theorem:: 2-designs : maximum cross correlation}, then \cref{theorem:: 2-designs : maximum cross correlation} can be used to construct a low coherence UNTF that meets the lower bound of \cref{theorem::minimal-worst-case-for-adding-to-etf}.
	Although the authors did not come across block designs that satisfied the constraints required for the bound (i) in~\cref{theorem:: 2-designs : maximum cross correlation}, there are more designs that satisfy the bound (ii) than the one given in Example~\ref{example:: symmetric design : optimal untf}.
	One way to find these designs is to consider \textit{complements} of block designs.
	The incidence matrix of the complement design is obtained by swapping $0$ with $1$ in the incidence matrix of the original design.
	For example, the bound (ii) does not apply to any $(19,9,4)$ block design (which may be found in~\cite{colbourn2010crc}), but the complement of such a design has parameters $(19,10,5)$.
	These parameters do in fact satisfy the constraints required to use the bound (ii).
\end{remark}
\begin{example}\label{example::nontight_approx_ETF}
As seen in Example~\ref{example:: symmetric design : optimal untf}, when using block designs to find low coherence UNTFs in a certain dimensional space, one is constrained by a frame size that is determined by the block design used. For a lower redundancy, one might wish for a UNTF of smaller size.

Suppose one wants a real $(21, 10)$ ETF which does not exist by Corollary 16 in~\cite{Sustik07}. Consider the $(22, 10)$ UNTF of Example~\ref{example:: symmetric design : optimal untf}, and discard a vector. The resulting (21,10) frame is no longer tight, and the condition number is approximately $1.833$ for any vector removed. Note that for a tight frame the condition number is always one.
After removing any vector, the maximum coherence between distinct vectors of the resulting $(21,10)$ frame is bounded above by $\frac{43}{120}\approx .3583$, while the Welch bound for a hypothetical $(21,10)$ ETF is $\sqrt{\frac{11}{200}}\approx.2345.$


\end{example}
%
\section{Conclusion}\label{section::conclusion}
%
In a $d$-dimensional space, the UNTFs constructed in Section~\ref{section::supermaximal_approx_of_ETFs} all have maximum coherence bounded below by $\frac{1}{\sqrt{d}}$ due to Theorem~\ref{theorem::minimal-worst-case-for-adding-to-etf}.
In order to judge the effectiveness of a particular $(N,d)$ low coherence UNTF, one can compare this maximum coherence to the Welch bound of the corresponding hypothetical $(N,d)$ ETF.
The construction in Section~\ref{subsection :: using block designs} approximates an $(N,d)$ ETF by using block designs and adding a UNTF to an ETF of size $d+1$. In this process, the number of added vectors is at least $d+1$ due to Theorem~\ref{THM:Fisher}.
Thus the size $N$ of the low coherence UNTF resulting from this construction is $N\geq 2(d+1).$
The Welch bound $\sqrt{\frac{N-d}{d(N-1)}} = \sqrt{\frac 1d - \frac{d-1}{d(N-1)}}$ increases as $N$ increases and $d$ is held fixed. Thus, when $N\geq 2(d+1),$ the corresponding Welch bound is at least $\sqrt{\frac{d+2}{2d^{2}+d}}$.
For large $d$, this is approximately equal to $\frac{1}{\sqrt{2d}}$.
In comparison, a corresponding low coherence UNTF obtained from Theorem~\ref{theorem:: 2-designs : maximum cross correlation} using a $(d+1,k,\lambda)$ block design with $k=\ceil*{\frac{d+1}{2}}$ and $\lambda=\ceil*{\frac{k(k-1)}{d+1}}$ would have maximum coherence approximately equal to $\frac{1}{\sqrt{d}}$.
So for large $N$ and large $d$ one can expect the frames provided by using such block designs to have maximum coherence larger than the corresponding Welch bound by a factor of about $\sqrt{2}$.

For a given dimension $d$, the construction of low coherence UNTFs discussed in Section~\ref{section::supermaximal_approx_of_ETFs} starts with a $(d+1, d)$ ETF.  Instead, one can start with other sets of vectors, such as orthonormal bases or other ETFs. $(d+1,d)$ ETFs are slightly more redundant than orthonormal bases and are straightforward to construct in all finite dimensional Hilbert spaces, making them a desirable starting point for the construction in this paper, but it would still be worthwhile to investigate this construction starting with other sets of vectors.
Determining for which UNTFs the inequality provided in \cref{theorem::minimal-worst-case-for-adding-to-etf} is sharp (see Remark~\ref{remark::minimal-worst-case-for-adding-to-etf}) and an analogue of Theorem~\ref{theorem:: 2-designs : UNTFs} for other sets of vectors would be particularly interesting.

\section*{Acknowledgement}

The authors would like to thank the anonymous reviewer for insightful comments and useful suggestions that greatly helped to improve the quality of the paper. 

\section*{Funding}

This material is based upon work supported by the National Science Foundation under Award No. CCF-1422252

\bibliographystyle{unsrt}
\bibliography{ETF_LAA_refs}

\end{document}